\newtheorem{theorem}{Theorem}
\newtheorem{example}{Example}
\newtheorem{remark}{Remark}
\newtheorem{corollary}{Corollary}
\newtheorem{proposition}{Proposition}
\begin{document}

\title{Further results on bent partitions$^{\dag}$}
\author{Jiaxin Wang, Yadi Wei, Fang-Wei Fu
\IEEEcompsocitemizethanks{\IEEEcompsocthanksitem Jiaxin Wang is with the School of Mathematics, Hefei University of Technology, Hefei 230601, China, Email: wjiaxin@hfut.edu.cn, Yadi Wei and Fang-Wei Fu are with the Chern Institute of Mathematics and LPMC, Nankai University, Tianjin 300071, China, Emails: wydecho@mail.nankai.edu.cn, fwfu@nankai.edu.cn.
}
\thanks{$^\dag$This research is supported by the National Key Research and Development Program of China (Grant No. 2022YFA1005000), the National Natural Science Foundation of China (Grant Nos. 12141108, 62371259,12411540221), the Fundamental Research Funds for the Central Universities of China (Nankai University), and the Nankai Zhide Foundation.}
\thanks{manuscript submitted  September 20, 2025}
}

\maketitle

\begin{abstract}
  Bent partitions of $V_{n}^{(p)}$ play an important role in constructing (vectorial) bent functions, partial difference sets, and association schemes, where $V_{n}^{(p)}$ denotes an $n$-dimensional vector space over the finite field $\mathbb{F}_{p}$, $n$ is an even positive integer, and $p$ is a prime. For bent partitions, there remains a challenging open problem: Whether the depth of any bent partition of $V_{n}^{(p)}$ is always a power of $p$. Notably, the depths of all current known bent partitions of $V_{n}^{(p)}$ are powers of $p$. In this paper, we prove that for a bent partition $\Gamma$ of $V_{n}^{(p)}$ for which all the $p$-ary bent functions generated by $\Gamma$ are regular or all are weakly regular but not regular, the depth of $\Gamma$ must be a power of $p$. We present new constructions of bent partitions that (do not) correspond to vectorial dual-bent functions. In particular, a new construction of vectorial dual-bent functions is provided. Additionally, for general bent partitions of $V_{n}^{(2)}$, we establish a characterization in terms of Hadamard matrices.
\end{abstract}

\begin{IEEEkeywords}
Bent partitions; depth of bent partitions; (vectorial) bent functions; vectorial dual-bent functions; Hadamard matrices
\end{IEEEkeywords}

%-------------------------------------------------------------------------------
\section{Introduction}
Let $V_{n}^{(p)}$ be an $n$-dimensional vector space over the finite field $\mathbb{F}_{p}$, where $p$ is a prime. In \cite{AM2022Be}, Anbar and Meidl introduced bent partitions, which play a significant role in constructing (vectorial) bent functions \cite{AKM2023On,AM2022Be}, partial difference sets \cite{AKM2022Be,WFW2023Be} and association schemes \cite{AKM2024A}. For a bent partition $\Gamma$ of $V_{n}^{(p)}$ of depth $K$, there is a challenging open problem proposed in \cite{AM2022Be}:
\begin{itemize}
  \item Is $K$ always a power of $p$?
\end{itemize}
The depth of all known bent partitions \cite{AM2022Be,AKM2023Ge,MP2021Be,WFW2023Be} are prime powers. The preimage partition of any ternary bent function is a bent partition of depth $3$ (see \cite{AAKM2024Be}). The preimage partition of any vectorial bent function from $V_{n}^{(2)}$ to $V_{2}^{(2)}$ is a bent partition of depth $4$ (see \cite{AFKMWW2025A}). For the depth $K>4$, the bent partitions in \cite{AM2022Be,AKM2023Ge,MP2021Be,WFW2023Be} all correspond to vectorial dual-bent functions. In \cite{AFKMWW2025A}, the first explicit construction of bent partition of depth $K>4$ not corresponding to vectorial dual-bent functions was given. It is interesting to construct more bent partitions that do not correspond to vectorial dual-bent functions.

Motivated by the above research questions, we further study bent partitions. In this paper, we consider bent partitions $\Gamma$ of $V_{n}^{(p)}$ for which all the $p$-ary bent functions generated by $\Gamma$ are regular or all are weakly regular but not regular (such bent partitions are called belonging to class $\mathcal{WBP}$). We give a sufficient and necessary condition for the preimage set partition of a weakly regular $p$-ary bent function $f: V_{n}^{(p)}\rightarrow \mathbb{F}_{p}$ to be a bent partition (where $n$ is even and $p$ is odd). Building on this condition, we then prove that for any bent partition $\Gamma$ of $V_{n}^{(p)}$ belonging to class $\mathcal{WBP}$, the depth $K$ of $\Gamma$ must be a power of $p$. Particularly, the depth of any bent partition $\Gamma$ of $V_{n}^{(2)}$ must be a power of $2$. We give new constructions of bent partitions which (do not) correspond to vectorial dual-bent functions. In particular, a new construction of vectorial dual-bent functions is presented. In \cite{WFWY2024A}, for bent partitions of $V_{n}^{(2)}$ corresponding to vectorial dual-bent functions, a characterization in terms of Hadamard matrices was given. In this paper, for general bent partitions of $V_{n}^{(2)}$, we provide a characterization in terms of Hadamard matrices.

The rest of the paper is organized as follows. Section II provides the needed preliminaries. In Section III, we show that the depth of certain bent partitions must be prime power. In Section IV, we give new constructions of bent partitions (not) corresponding to vectorial dual-bent functions, and characterize general bent partitions of $V_{n}^{(2)}$ in terms of Hadamard matrices. In Section V, a conclusion is given.

\section{Preliminaries}
In this section, we give the needed results on (vectorial) bent functions, bent partitions, and generalized Hadamard matrices. First, some notations are given used throughout this paper.

\begin{itemize}
  \item $p$ is a prime.
  \item $\zeta_{p}$ is a complex primitive $p$-th root of unity.
  \item $\mathbb{F}_{p^n}$ is the finite field with $p^n$ elements.
  \item $\mathbb{F}_{p}^{n}$ is the vector space of the $n$-tuples over $\mathbb{F}_{p}$.
  \item $V_{n}^{(p)}$ is an $n$-dimensional vector space over $\mathbb{F}_{p}$.
  \item $\langle, \rangle_{n}$ denotes a (non-degenerate) inner product of $V_{n}^{(p)}$. When $V_{n}^{(p)}=\mathbb{F}_{p^n}$, let $\langle a, b\rangle_{n}=Tr_{1}^{n}(ab)$, where $Tr_{m}^{n}$ denotes the trace function from $\mathbb{F}_{p^n}$ to $\mathbb{F}_{p^m}$, $m \mid n$; when $V_{n}^{(p)}=\mathbb{F}_{p}^{n}$, let $\langle a, b\rangle_{n}=a \cdot b$, where $\cdot$ is the dot product; when $V_{n}^{(p)}=V_{n_{1}}^{(p)} \times \dots \times V_{n_{s}}^{(p)}$, let $\langle a, b\rangle_{n}=\sum_{i=1}^{s}\langle a_{i}, b_{i}\rangle_{n_{i}}$, where $a=(a_{1}, \dots, a_{s}), b=(b_{1}, \dots, b_{s}) \in V_{n}^{(p)}$.
  \item For a function $F: S_{1}\rightarrow S_{2}$, let $D_{F, i}=\{x \in S_{1}: F(x)=i\}, i \in S_{2}$.
  \item For a set $A \subseteq V_{n}^{(p)}$, let $A^{*}=A \backslash \{0\}$ and $\chi_{a}(A)=\sum_{x \in A}\chi_{a}(x), a \in V_{n}^{(p)}$, where $\chi_{a}$ denotes the character $\chi_{a}(x)=\zeta_{p}^{\langle a, x\rangle_{n}}$.
  \item For a set $A$, let $\delta_{A}$ be the indicator function. If $A=\{a\}$, we simply denote $\delta_{\{a\}}$ by $\delta_{a}$.
\end{itemize}

\subsection{(Vectorial) bent functions}
A function from $V_{n}^{(p)}$ to $V_{m}^{(p)}$ is called a \emph{vectorial $p$-ary function}, or simply \emph{$p$-ary function} if $m=1$. For a vectorial $p$-ary function $F: V_{n}^{(p)}\rightarrow V_{m}^{(p)}$, if the sizes of $D_{F, i}, i \in V_{m}^{(p)}$, are all the same, then $F$ is called \emph{balanced}. For a $p$-ary function $f: V_{n}^{(p)}\rightarrow \mathbb{F}_{p}$, if $f(cx)=c^{l}f(x)$ for any $c \in \mathbb{F}_{p}^{*}, x \in V_{n}^{(p)}$, where $1\leq l\leq p-1$ is an integer, then $f$ is called of \emph{$l$-form}.

The \emph{Walsh transform} of a $p$-ary function $f: V_{n}^{(p)}\rightarrow \mathbb{F}_{p}$ is defined as
\begin{equation*}
W_{f}(a)=\sum_{x \in V_{n}^{(p)}}\zeta_{p}^{f(x)-\langle a, x\rangle_{n}}, a \in V_{n}^{(p)}.
\end{equation*}
A $p$-ary function $f: V_{n}^{(p)}\rightarrow \mathbb{F}_{p}$ is called a \emph{bent} function if $|W_{f}(a)|=p^{\frac{n}{2}}$ for any $a \in V_{n}^{(p)}$. For a $p$-ary bent function $f: V_{n}^{(p)}\rightarrow \mathbb{F}_{p}$, when $p=2$, then
\begin{equation*}
W_{f}(a)=2^{\frac{n}{2}}(-1)^{f^{*}(a)},
\end{equation*}
and when $p$ is odd, then
\begin{equation*}
W_{f}(a)=\left\{
\begin{split}
\pm p^{\frac{n}{2}}\zeta_{p}^{f^{*}(a)}, & \text{ if } p\equiv 1 \pmod 4 \text{ or } n \text{ is even},\\
\pm \sqrt{-1}p^{\frac{n}{2}}\zeta_{p}^{f^{*}(a)}, & \text{ if } p \equiv 3 \pmod 4 \text{ and } n \text{ is odd},
\end{split}
\right.
\end{equation*}
where $f^{*}: V_{n}^{(p)}\rightarrow \mathbb{F}_{p}$ is called the \emph{dual} of $f$. If there is a constant $\varepsilon_{f}$ such that $W_{f}(a)=\varepsilon_{f}p^{\frac{n}{2}}\zeta_{p}^{f^{*}(a)}, a \in V_{n}^{(p)}$, then the $p$-ary bent function $f$ is called \emph{weakly regular}. In particular, if $\varepsilon_{f}=1$, then $f$ is called \emph{regular}. For any weakly regular bent function $f$, the dual $f^{*}$ is also a weakly regular bent function and $\varepsilon_{f^{*}}=\varepsilon_{f}^{-1}, (f^{*})^{*}(x)=f(-x)$. Note that if $f$ is a bent function with $f(x)=f(-x)$, then $f^{*}(x)=f^{*}(-x)$.

A vectorial $p$-ary function $F: V_{n}^{(p)}\rightarrow V_{m}^{(p)}$ is called \emph{vectorial bent} if all component functions $F_{c}(x)=\langle c, F(x)\rangle_{m}, c \in {V_{m}^{(p)}}^{*}$, are $p$-ary bent functions. Note that any $p$-ary bent function $f$ is vectorial bent. A vectorial bent function $F: V_{n}^{(p)}\rightarrow V_{m}^{(p)}$ is said to be \emph{vectorial dual-bent} if there exists a vectorial bent function $G: V_{n}^{(p)}\rightarrow V_{m}^{(p)}$ such that $(F_{c})^{*}=G_{\sigma(c)}, c \in {V_{m}^{(p)}}^{*}$, where $\sigma$ is a permutation over $V_{m}^{(p)}$ with $\sigma(0)=0$. The vectorial bent function $G$ is called a \emph{vectorial dual} of $F$ and denoted by $F^{*}$. The vectorial dual of $F$ is not unique.

\subsection{Bent partitions}
Let $n$ be an even positive integer, $K$ be a positive integer divisible by $p$. Let $\Gamma=\{A_{1}, \dots, A_{K}\}$ be a partition of $V_{n}^{(p)}$. Assume that each $p$-ary function $f: V_{n}^{(p)} \rightarrow \mathbb{F}_{p}$ for which any $i \in \mathbb{F}_{p}$ has exactly $\frac{K}{p}$ of sets $A_{j}$ in $\Gamma$ in its preimage set, is a $p$-ary bent function. Then $\Gamma$ is called a \emph{bent partition of $V_{n}^{(p)}$ of depth $K$}, and every such bent function $f$ is called a \emph{bent function generated by bent partition $\Gamma$}.

In this paper, a bent partition $\Gamma$ of $V_{n}^{(p)}$ is called belonging to \emph{class $\mathcal{WBP}$} if all $p$-ary bent functions generated by $\Gamma$ are regular or all are weakly regular but not regular. We denote by $\varepsilon=\varepsilon_{f}$ for all $p$-ary bent functions $f$ generated by $\Gamma$. Note that any bent partition of $V_{n}^{(2)}$ belongs to class $\mathcal{WBP}$.

If the depth of a bent partition of $V_{n}^{(p)}$ is a power of $p$, then the following theorem holds.

\begin{theorem}\label{Theorem 1}
Let $n$ be an even positive integer, and let $\Gamma=\{A_{i}, i \in V_{m}^{(p)}\}$ be a partition of $V_{n}^{(p)}$. Define $F: V_{n}^{(p)}\rightarrow V_{m}^{(p)}$ as
\begin{equation*}
F(x)=i, \text{ if } x \in A_{i}.
\end{equation*}
Then the following statements are pairwise equivalent.

$\mathrm{(i)}$ $\Gamma$ be a bent partition (belonging to class $\mathcal{WBP}$).

$\mathrm{(ii)}$ $F$ is a vectorial bent function for which for any permutation $P$ over $V_{m}^{(p)}$, $P(F(x))$ is a vectorial bent function (with all component functions $(P(F))_{c}, c \in {V_{m}^{(p)}}^{*}$, being weakly regular and $\varepsilon_{(P(F))_{c}}=\varepsilon$, where $\varepsilon \in \{\pm 1\}$ is a constant independent of $P$ and $c$).

$\mathrm{(iii)}$ For any balanced function $B$ from $V_{m}^{(p)}$ to $\mathbb{F}_{p}$, $B(F(x))$ is a $p$-ary bent function (with $B(F(x))$ being weakly regular and $\varepsilon_{B(F(x))}=\varepsilon$, where $\varepsilon \in \{\pm 1\}$ is a constant independent of $B$).
\end{theorem}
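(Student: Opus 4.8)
The plan is to recognise that all three statements describe, from different vantage points, a single family of $p$-ary functions on $V_{n}^{(p)}$: those of the form $h\circ F$ with $h:V_{m}^{(p)}\to\mathbb{F}_{p}$ balanced. First I would unwind the definition of bent partition. A function $f$ generated by $\Gamma$ is, by definition, constant on each block $A_{i}$ and assigns to the blocks values in $\mathbb{F}_{p}$ so that each value is attained on exactly $K/p=p^{m-1}$ blocks; equivalently $f=B\circ F$ for a balanced $B:V_{m}^{(p)}\to\mathbb{F}_{p}$. Hence the demand in (i) that every generated $f$ be bent is verbatim the demand in (iii) that $B\circ F$ be bent for every balanced $B$, so $\mathrm{(i)}\Leftrightarrow\mathrm{(iii)}$ holds essentially by definition.

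The bridge to (ii) is the combinatorial identity that, as $P$ ranges over all permutations of $V_{m}^{(p)}$ and $c$ over ${V_{m}^{(p)}}^{*}$, the component functions $(P\circ F)_{c}(x)=\langle c,P(F(x))\rangle_{m}=\bigl(\langle c,P(\cdot)\rangle_{m}\bigr)\circ F$ realise exactly the family $\{B\circ F:B\text{ balanced}\}$. The forward inclusion is immediate: for $c\neq 0$ the map $y\mapsto\langle c,y\rangle_{m}$ is balanced, and precomposing with the permutation $P$ preserves balancedness, so $i\mapsto\langle c,P(i)\rangle_{m}$ is balanced. For the converse, given a balanced $B$ I would fix any $c\neq 0$, write $\ell(y)=\langle c,y\rangle_{m}$, and observe that $B$ and $\ell$ have identical fibre sizes, namely $p^{m-1}$ each; choosing a bijection $P$ of $V_{m}^{(p)}$ that carries each fibre $D_{B,j}$ onto $D_{\ell,j}$ yields $\ell(P(i))=B(i)$, i.e. $B=\langle c,P(\cdot)\rangle_{m}$. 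Equipped with this identity, the assertion ``$P\circ F$ is vectorial bent for every $P$'' (which, taking $P=\mathrm{id}$, already contains ``$F$ is vectorial bent'') becomes equivalent to ``$B\circ F$ is bent for every balanced $B$'', giving $\mathrm{(ii)}\Leftrightarrow\mathrm{(iii)}$.

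It then remains to reconcile the parenthetical $\mathcal{WBP}$ clauses. Since $n$ is even, the Walsh-transform formula forces $\varepsilon_{f}\in\{\pm 1\}$ for every weakly regular bent function arising here; consequently the class $\mathcal{WBP}$ condition in (i) (all generated bent functions regular, or all weakly regular but not regular) says precisely that all $B\circ F$ are weakly regular with a common $\varepsilon\in\{\pm 1\}$. Because the two families $\{(P\circ F)_{c}\}_{P,c}$ and $\{B\circ F\}_{B}$ coincide by the identity above, the uniform-$\varepsilon$ statements in (ii) and (iii) are literally the same condition, and both match this refinement of (i); I would therefore thread the constant $\varepsilon$ through each equivalence in lockstep with its unrefined form.

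The main obstacle is the converse direction of the combinatorial identity, that is, producing the permutation $P$ that realises a prescribed balanced $B$, together with the care required so that the weak-regularity constant $\varepsilon$ is handled uniformly and the refined ($\mathcal{WBP}$) forms of (i), (ii), (iii) stay in step with their unrefined forms. Both points reduce to fibre-size bookkeeping: the only arithmetic input is that balancedness means every fibre has size $p^{m-1}=K/p$, and nothing in the argument depends on the particular non-degenerate inner product chosen on $V_{m}^{(p)}$.
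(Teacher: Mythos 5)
Your proposal is correct and follows essentially the same route as the paper: the decisive step in both is the fibre-matching construction of a permutation $P$ with $B=\langle c,P(\cdot)\rangle_{m}$, which identifies the family of balanced compositions $B\circ F$ with the family of component functions of the $P\circ F$, and the identification of generated functions with $B\circ F$ for balanced $B$ gives the link to (i). The only difference is organizational (you prove $\mathrm{(i)}\Leftrightarrow\mathrm{(iii)}$ and $\mathrm{(ii)}\Leftrightarrow\mathrm{(iii)}$ rather than the paper's cycle $\mathrm{(i)}\Rightarrow\mathrm{(ii)}\Rightarrow\mathrm{(iii)}\Rightarrow\mathrm{(i)}$), which is immaterial.
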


\begin{proof}
$\mathrm{(i)}\Rightarrow \mathrm{(ii)}$: Let $P$ be any fixed permutation over $V_{m}^{(p)}$. Denote $G(x)=P(F(x))$. For any $c \in {V_{m}^{(p)}}^{*}$ and $j \in \mathbb{F}_{p}$, let $\alpha_{1}^{(c, j)}, \dots, \alpha_{p^{m-1}}^{(c, j)}$ be the solutions of the equation $\langle c, x\rangle_{m}=j$. Then for any $c \in {V_{m}^{(p)}}^{*}$, it is easy to check that
\begin{equation*}
D_{G_{c}, j}=\{A_{P^{-1}(\alpha_{1}^{(c,j)})}, A_{P^{-1}(\alpha_{2}^{(c,j)})}, \dots, A_{P^{-1}(\alpha_{p^{m-1}}^{(c,j)})}\} \text{ for any } j \in \mathbb{F}_{p}.
\end{equation*}
Thus, for any $c \in {V_{m}^{(p)}}^{*}$, $G_{c}$ is a $p$-ary bent function generated by the bent partition $\Gamma$, and then $G$ is a vectorial bent function. Furthermore, if $\Gamma$ belongs to class $\mathcal{WBP}$, then $G_{c}, c \in {V_{m}^{(p)}}^{*}$, are all weakly regular with $\varepsilon_{G_{c}}=\varepsilon$, where $\varepsilon \in \{\pm 1\}$ is a constant independent of $P$ and $c$.

$\mathrm{(ii)}\Rightarrow \mathrm{(iii)}$: Let $B$ be any fixed balanced function from $V_{m}^{(p)}$ to $\mathbb{F}_{p}$, and let $\beta \in {V_{m}^{(p)}}^{*}$ be fixed. Denote $f(x)=B(F(x))$. For any $j \in \mathbb{F}_{p}$, let $\alpha_{1}^{(j)}, \dots, \alpha_{p^{m-1}}^{(j)}$ be the solutions of the equation $\langle \beta, x\rangle_{m}=j$, and let $b_{1}^{(j)}, \dots, b_{p^{m-1}}^{(j)}$ be the solutions of the equation $B(x)=j$. Define $P: V_{m}^{(p)}\rightarrow V_{m}^{(p)}$ as $P(b_{i}^{(j)})=\alpha_{i}^{(j)}, 1\leq i \leq p^{m-1}, j \in \mathbb{F}_{p}$. Then it is easy to check that $P$ is a permutation and $B(x)=\langle \beta, P(x)\rangle_{m}$. Thus, $B(F(x))$ is the component function of the vectorial bent function $P(F(x))$, and then $B(F(x))$ is a $p$-ary bent function. Furthermore, if $(P(F))_{c}, c \in {V_{m}^{(p)}}^{*}$, are all weakly regular with $\varepsilon_{(P(F))_{c}}=\varepsilon$, where $\varepsilon \in \{\pm 1\}$ is a constant independent of $P$ and $c$, then $B(F)$ is weakly regular with $\varepsilon_{B(F)}=\varepsilon$, where $\varepsilon$ is independent of $B$.

$\mathrm{(iii)} \Rightarrow \mathrm{(i)}$: Let $f: V_{n}^{(p)}\rightarrow \mathbb{F}_{p}$ be a $p$-ary function for which for any $j \in \mathbb{F}_{p}$, $D_{f, j}=\cup_{i=1}^{p^{m-1}}A_{j_{i}}$. Define $B: V_{m}^{(p)}\rightarrow \mathbb{F}_{p}$ as $B(j_{i})=j, 1\leq i\leq p^{m-1}, j \in \mathbb{F}_{p}$. Then it is easy to check that $B$ is balanced and $f(x)=B(F(x))$. Thus, $f$ is a $p$-ary bent function, and $\Gamma$ is a bent partition. Furthermore, if for any balanced function $B: V_{m}^{(p)}\rightarrow \mathbb{F}_{p}$, $B(F)$ is weakly regular with $\varepsilon_{B(F)}=\varepsilon$, where $\varepsilon$ is a constant independent of $B$, then $\Gamma$ belongs to class $\mathcal{WBP}$.
\end{proof}

The following proposition shows that if there exists a bent partition of $V_{n}^{(p)}$ of depth $K$ (where $p \mid K$), then for any positive integer $K'$ with $p \mid K'$ and $K' \mid K$, there exists a bent partition of depth $K'$.

\begin{proposition}\label{Proposition 1}
Let $K$ and $K'$ be positive integers with $p \mid K'$ and $K' \mid K$. If there exists a bent partition of $V_{n}^{(p)}$ of depth $K$ (belonging to class $\mathcal{WBP}$), then there exists a bent partition of $V_{n}^{(p)}$ of depth $K'$ (belonging to class $\mathcal{WBP}$).
\end{proposition}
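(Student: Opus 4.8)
The plan is to produce the depth-$K'$ bent partition by merging blocks of the given depth-$K$ partition. Write $\Gamma=\{A_{1},\dots,A_{K}\}$ for the bent partition of $V_{n}^{(p)}$ of depth $K$, and set $t=K/K'$, which is a positive integer since $K'\mid K$. First I would fix any partition of the index set $\{1,\dots,K\}$ into $K'$ disjoint groups $I_{1},\dots,I_{K'}$, each of size $t$, and define $B_{l}=\bigcup_{j\in I_{l}}A_{j}$ for $1\leq l\leq K'$. Since the $A_{j}$ form a partition of $V_{n}^{(p)}$, so does $\Gamma'=\{B_{1},\dots,B_{K'}\}$, and it has exactly $K'$ blocks. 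Note that $p\mid K'$ is given, so $K'/p$ is an integer and the notion of ``depth $K'$'' is well defined.

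Next I would verify that $\Gamma'$ is a bent partition of depth $K'$ by a direct counting argument. Let $g:V_{n}^{(p)}\rightarrow\mathbb{F}_{p}$ be any $p$-ary function for which each $i\in\mathbb{F}_{p}$ has exactly $K'/p$ of the blocks $B_{l}$ in its preimage set $D_{g,i}$. Because $g$ is constant on each $B_{l}$ and each $B_{l}$ is the union of the $t$ blocks $A_{j}$ with $j\in I_{l}$, the function $g$ is also constant on every $A_{j}$. Counting the original blocks assigned to a value $i$, each of the $K'/p$ blocks $B_{l}$ in $D_{g,i}$ contributes exactly $t=K/K'$ of the $A_{j}$, so the total is $(K'/p)\cdot(K/K')=K/p$. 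Hence $g$, regarded as a function built from $\Gamma$, places exactly $K/p$ of the blocks $A_{j}$ into each preimage, so $g$ is a bent function generated by $\Gamma$ and is therefore bent. This shows $\Gamma'$ is a bent partition of depth $K'$.

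Finally, for the parenthetical class-$\mathcal{WBP}$ statement, I would observe that the above argument exhibits every bent function generated by $\Gamma'$ as a bent function generated by $\Gamma$; in other words, the functions generated by $\Gamma'$ form a subfamily of those generated by $\Gamma$, namely the ones that are constant on each merged block $B_{l}$. Consequently, if all bent functions generated by $\Gamma$ are regular, or all are weakly regular but not regular, with common value $\varepsilon$, then the same holds for all bent functions generated by $\Gamma'$, so $\Gamma'$ belongs to class $\mathcal{WBP}$ as well.

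I do not expect a genuine obstacle here: the construction is simply block-merging, and the only thing requiring care is the bookkeeping that each value of a generated function still receives exactly $K/p$ of the original blocks --- which follows because every merged group $I_{l}$ has the same size $t=K/K'$. The divisibility hypotheses $p\mid K'$ and $K'\mid K$ are exactly what make this uniform grouping possible.
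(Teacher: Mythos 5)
Your proposal is correct and follows essentially the same route as the paper: merge the $K$ blocks into $K'$ equal groups of size $K/K'$, and observe that every function generated by the merged partition is already a bent function generated by the original one (with the required $\varepsilon$-uniformity inherited in the $\mathcal{WBP}$ case). No issues.
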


\begin{proof}
Note that $p \mid K$. Suppose that $\Gamma=\{A_{i}, 1\leq i \leq K\}$ is a bent partition of $V_{n}^{(p)}$ of depth $K$. Denote $k=\frac{K}{K'}$. Define $\Gamma'=\{B_{j}, 1\leq j\leq K'\}$, where $B_{j}=\bigcup_{i=(j-1)k+1}^{jk}A_{i}, 1\leq j \leq K'$. Let $f: V_{n}^{(p)}\rightarrow \mathbb{F}_{p}$ be a $p$-ary function for which for any $j \in \mathbb{F}_{p}$, $D_{f, j}=\bigcup_{s=1}^{\frac{K'}{p}}B_{j_{s}}$. Then $D_{f, j}=\bigcup_{s=1}^{\frac{K'}{p}}\bigcup_{i=(j_{s}-1)k+1}^{j_{s}k}A_{i}$. Since $\Gamma$ is a bent partition (belonging to class $\mathcal{WBP}$), then $f$ is a $p$-ary bent function (which is weakly regular with $\varepsilon_{f}=\varepsilon$, where $\varepsilon \in \{\pm 1\}$ is a constant independent of $f$). Therefore, $\Gamma'$ is a bent partition of depth $K'$ (belonging to class $\mathcal{WBP}$).
\end{proof}

\subsection{Generalized Hadamard matrices}
For a complex matrix $H$ of size $n \times n$ consisting of integer powers of $\zeta_{m}=e^{\frac{2\pi \sqrt{-1}}{m}}$, if $H\overline{H}^{\top}=nI_{n}$, then $H$ is called a \textit{generalized Hadamard matrix}, where $\overline{H}^{\top}$ is the transpose matrix of the conjugate matrix of $H$, and $I_{n}$ is the identity matrix of size $n \times n$. When $m=2$, $H$ is simply called a \emph{Hadamard matrix}.

For $f: V_{n}^{(p)}\rightarrow \mathbb{F}_{p}$, by \cite[Property 4]{KSW1985Ge}, $H=[\zeta_{p}^{f(x-y)}]_{x, y \in V_{n}^{(p)}}$ is a generalized Hadamard matrix if and only if $f$ is a $p$-ary bent function. Denote $U_{p}^{(1)}=\{\zeta_{p}^{i}: 0\leq i \leq p-1\}, U_{p}^{(-1)}=\{-\zeta_{p}^{i}: 0\leq i \leq p-1\}$. Let $H=[\zeta_{p}^{f(x-y)}]_{x, y \in V_{n}^{(p)}}$ be a generalized Hadamard matrix, where $f: V_{n}^{(p)}\rightarrow \mathbb{F}_{p}$ is bent and $n$ is even. With the same definition as in \cite{WFWY2024A}, the generalized Hadamard matrix $H$ is called of \emph{weakly regular type} if for all $z \in V_{n}^{(p)}$,
\begin{equation*}
p^{-\frac{n}{2}}s_{z} \in U_{p}^{(\nu_{H})},
\end{equation*}
where $s_{z}$ is the sum of the elements in a row (or column) of the generalized Hadamard matrix $H_{z}=[\zeta_{p}^{f(x-y)-\langle z, x-y\rangle_{n}}]_{x, y \in V_{n}^{(p)}}$, and $\nu_{H} \in \{\pm 1\}$ is a constant with $\nu_{H}=1$ if $p=2$.

\section{The depth of bent partitions}
In this section, we study the depth of bent partitions belonging to class $\mathcal{WBP}$. First, we give a sufficient and necessary condition for the preimage set partition of a weakly regular bent function $f: V_{n}^{(p)}\rightarrow \mathbb{F}_{p}$ to be a bent partition belonging to class $\mathcal{WBP}$, where $n$ is even and $p$ is odd.

\begin{theorem} \label{Theorem 2}
Let $n$ be an even positive integer and $p$ be an odd prime. Let $f: V_{n}^{(p)}\rightarrow \mathbb{F}_{p}$ be a weakly regular bent function. Then
$\Gamma=\{D_{f, i}, i \in \mathbb{F}_{p}\}$ is a bent partition belonging to class $\mathcal{WBP}$ if and only if
\begin{equation} \label{1}
cf^{*}(c^{-1}x)=\frac{(c+1)f^{*}(x)+(c-1)f^{*}(-x)}{2}, c \in \mathbb{F}_{p}^{*}.
\end{equation}
\end{theorem}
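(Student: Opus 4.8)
The plan is to exploit that $\Gamma=\{D_{f,i}, i\in\mathbb{F}_{p}\}$ has depth $p$, so that the bent functions generated by $\Gamma$ are exactly the compositions $\pi\circ f$ with $\pi$ a permutation of $\mathbb{F}_{p}$: each value $i$ must receive exactly $K/p=1$ block, which is precisely a bijective relabeling of blocks. Hence $\Gamma$ is a bent partition belonging to $\mathcal{WBP}$ if and only if $\pi\circ f$ is weakly regular bent with one common sign for every permutation $\pi$. First I would record the scaling identity: applying the Galois automorphism $\zeta_{p}\mapsto\zeta_{p}^{j}$ to $W_{f}(j^{-1}a)$ gives, for $j\in\mathbb{F}_{p}^{*}$ and $n$ even, $W_{jf}(a)=\varepsilon_{f}p^{\frac{n}{2}}\zeta_{p}^{jf^{*}(j^{-1}a)}$, so $jf$ is weakly regular bent with sign $\varepsilon_{f}$ and $(jf)^{*}(a)=jf^{*}(j^{-1}a)$. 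Writing $W_{\pi\circ f}(a)=\sum_{i}\zeta_{p}^{\pi(i)}\sum_{x\in D_{f,i}}\zeta_{p}^{-\langle a,x\rangle_{n}}$ and using finite Fourier inversion to express $\sum_{x\in D_{f,i}}\zeta_{p}^{-\langle a,x\rangle_{n}}$ through the $W_{jf}(a)$, I obtain for $a\neq 0$ that $W_{\pi\circ f}(a)=\varepsilon_{f}p^{\frac{n}{2}}\cdot\frac{1}{p}\sum_{i\in\mathbb{F}_{p}}\zeta_{p}^{\pi(i)}c_{i}^{(a)}$, where $c_{i}^{(a)}=\sum_{l\in\mathbb{F}_{p}^{*}}\zeta_{p}^{(lf)^{*}(a)-li}$; the case $a=0$ is automatic because $\sum_{i}\zeta_{p}^{\pi(i)}=0$. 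Thus the theorem reduces to the single statement that, for each fixed $a\neq 0$, the quantity $z_{\pi}=\frac{1}{p}\sum_{i}\zeta_{p}^{\pi(i)}c_{i}^{(a)}$ lies in $U_{p}^{(1)}$ for every permutation $\pi$ precisely when $l\mapsto (lf)^{*}(a)=lf^{*}(l^{-1}a)$ is affine in $l$, which is exactly (\ref{1}).

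For the sufficiency direction I would assume (\ref{1}) and rewrite its right-hand side as the affine function $s_{a}+l\,i_{0}(a)$ of $l$, with $s_{a}=\frac{1}{2}(f^{*}(a)-f^{*}(-a))$ and $i_{0}(a)=\frac{1}{2}(f^{*}(a)+f^{*}(-a))$, so that $(lf)^{*}(a)=s_{a}+l\,i_{0}(a)$. Substituting this into $c_{i}^{(a)}=\zeta_{p}^{s_{a}}\sum_{l\neq 0}\zeta_{p}^{l(i_{0}(a)-i)}$ and summing the geometric series shows that $c_{i}^{(a)}$ equals $(p-1)\zeta_{p}^{s_{a}}$ when $i=i_{0}(a)$ and $-\zeta_{p}^{s_{a}}$ otherwise. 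Feeding this ``one exceptional value, all others equal'' shape into $z_{\pi}$ and again using $\sum_{i}\zeta_{p}^{\pi(i)}=0$, the sum collapses to $\zeta_{p}^{s_{a}+\pi(i_{0}(a))}\in U_{p}^{(1)}$. Hence every $\pi\circ f$ is weakly regular bent with sign $\varepsilon_{f}$, and $\Gamma$ belongs to $\mathcal{WBP}$.

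The substance is the converse, where I must show that the discrete constraint $z_{\pi}\in U_{p}^{(1)}$ for all permutations forces this exceptional-value shape on $(c_{i}^{(a)})_{i}$. Write $d_{i}=c_{i}^{(a)}$; a direct character computation gives the structure-free identities $\sum_{i}d_{i}=0$ and $\sum_{i}|d_{i}|^{2}=p(p-1)$. The key mechanism is a transposition argument: for any two positions $i,j$ I may realize prescribed distinct roots $\zeta_{p}^{u},\zeta_{p}^{v}$ at those positions and then swap them, and since both resulting sums lie in $pU_{p}^{(1)}$ their difference satisfies $(\zeta_{p}^{u}-\zeta_{p}^{v})(d_{i}-d_{j})\in p\,(U_{p}^{(1)}-U_{p}^{(1)})$. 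Taking moduli gives $|\zeta_{p}^{u}-\zeta_{p}^{v}|\cdot|d_{i}-d_{j}|=p\,|\zeta_{p}^{a}-\zeta_{p}^{b}|$ for suitable $a,b$; since $|\zeta_{p}^{u}-\zeta_{p}^{v}|$ runs through all the values $2\sin(\pi k/p)$, $k=1,\dots,p-1$, as $u-v$ varies while $|d_{i}-d_{j}|$ stays fixed, comparing the smallest and largest such values pins $|d_{i}-d_{j}|$ to lie in $\{0,p\}$.

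Finally I would combine $|d_{i}-d_{j}|\in\{0,p\}$ with $\sum_{i<j}|d_{i}-d_{j}|^{2}=p\sum_{i}|d_{i}|^{2}=p^{2}(p-1)$ to see that exactly $p-1$ of the pairs are at distance $p$. Grouping the $d_{i}$ into classes of equal value, the class representatives are pairwise at distance exactly $p$, so at most three classes can occur (no four points of $\mathbb{C}$ are pairwise equidistant); a short count of $\sum_{s}n_{s}^{2}$ against $(p-1)^{2}+1$ then eliminates one and three classes and forces two classes of sizes $1$ and $p-1$, which is precisely the exceptional-value shape. Inverting the discrete Fourier transform returns the affinity of $l\mapsto (lf)^{*}(a)$, i.e. (\ref{1}). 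I expect this rigidity step, namely establishing $|d_{i}-d_{j}|\in\{0,p\}$ together with the equidistant-point classification, to be the main obstacle; the scaling identity, the Fourier reduction, and the sufficiency computation are routine by comparison.
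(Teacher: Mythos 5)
Your proposal is correct, and for the substantive (necessity) direction it takes a genuinely different route from the paper. Both arguments share the same skeleton: reduce, via the scaling identity $(cf)^{*}(x)=cf^{*}(c^{-1}x)$ and Fourier inversion, to showing that for each fixed $a$ the vector $\bigl(c_{i}^{(a)}\bigr)_{i\in\mathbb{F}_{p}}$ (equivalently $(\chi_{a}(D_{f,i}))_{i}$) must have the one-exceptional-value shape; the sufficiency computations are essentially identical (the paper's $\Leftarrow$ direction is exactly your geometric-series collapse). For the necessity, however, the paper expands the ratios $A_{j}^{(a)}=(1-\zeta_{p}^{(B_{0,j}(f))^{*}(-a)-f^{*}(-a)})/(1-\zeta_{p}^{j})$ in the integral basis $\{\zeta_{p},\dots,\zeta_{p}^{p-1}\}$ with coefficients in $\{0,-1\}$ and runs a long case analysis over carefully chosen permutations (its Cases I and II, Eqs.~(12)--(17)), whereas you use a metric rigidity argument: the transposition identity $(\zeta_{p}^{u}-\zeta_{p}^{v})(d_{i}-d_{j})=p(\zeta_{p}^{\alpha}-\zeta_{p}^{\beta})$, comparison at the extreme moduli $2\sin(\pi/p)$ and $2\cos(\pi/(2p))$ to force $|d_{i}-d_{j}|\in\{0,p\}$, and then Parseval ($\sum_{i<j}|d_{i}-d_{j}|^{2}=p^{2}(p-1)$) together with the planar equidistance bound to pin down two value-classes of sizes $1$ and $p-1$. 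I checked the arithmetic: the two extreme choices of $u-v$ do give $|d_{i}-d_{j}|\ge p$ and $\le p$ respectively (legitimately, since $d_{i},d_{j}$ do not depend on the permutation); $\sum_{s}n_{s}^{2}=(p-1)^{2}+1$ excludes one and three classes for odd $p$; and Fourier inversion plus evaluation at $l=1$ shows the repeated value is $-\zeta_{p}^{s}$, which recovers affinity of $l\mapsto(lf)^{*}(a)$ and hence Eq.~(\ref{1}). Your route is noticeably shorter and more conceptual than the paper's coefficient bookkeeping; what both deliver in the end is the same explicit form of $\chi_{a}(D_{f,j})$ that the paper reuses in Theorem~\ref{Theorem 3}, so nothing downstream is lost. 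The only points worth making fully explicit in a write-up are the derivation of the scaling identity for $n$ even (rationality of $p^{n/2}$ under the Galois action) and the $l=1$ evaluation that upgrades $|w|=1$ to $-w\in U_{p}^{(1)}$.
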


\begin{proof}
$\Rightarrow$:
When $p=3$, obviously Eq. (1) holds. In the following, let $p\geq 5$. By Theorem 1, for any permutation $B: \mathbb{F}_{p}\rightarrow \mathbb{F}_{p}$, $B(f(x))$ is a weakly regular bent function with $\varepsilon_{B(f)}=\varepsilon$, where $\varepsilon \in \{\pm 1\}$ is a constant. In the following, let $a \in V_{n}^{(p)}$ be fixed. We have
\begin{equation}\label{2}
\begin{split}
\varepsilon p^{\frac{n}{2}}\zeta_{p}^{(B(f))^{*}(-a)}=W_{B(f)}(-a)=\sum_{x \in V_{n}^{(p)}}\zeta_{p}^{B(f(x))+\langle a, x\rangle_{n}}=\sum_{i \in \mathbb{F}_{p}}\chi_{a}(D_{f, i})\zeta_{p}^{B(i)}.
\end{split}
\end{equation}
When $B=B_{0, j}$, where $j \in \mathbb{F}_{p}^{*}$ and $B_{0, j}=(0j)$ is the transposition permutation, by Eq. (2), we have
\begin{equation}\label{3}
\sum_{i \in \mathbb{F}_{p} \backslash \{0, j\}}\chi_{a}(D_{f, i})\zeta_{p}^{i}+\chi_{a}(D_{f, 0})\zeta_{p}^{j}+\chi_{a}(D_{f, j})=\varepsilon p^{\frac{n}{2}}\zeta_{p}^{(B_{0, j}(f))^{*}(-a)}.
\end{equation}
When $B$ is the identity map, we have
\begin{equation}\label{4}
\sum_{i \in \mathbb{F}_{p} \backslash \{0, j\}}\chi_{a}(D_{f, i})\zeta_{p}^{i}+\chi_{a}(D_{f, 0})+\chi_{a}(D_{f, j})\zeta_{p}^{j}=\varepsilon p^{\frac{n}{2}}\zeta_{p}^{f^{*}(-a)}.
\end{equation}
By Eqs. (3) and (4), for any $j \in \mathbb{F}_{p}^{*}$, we have
\begin{equation*}
(\chi_{a}(D_{f, j})-\chi_{a}(D_{f, 0}))(1-\zeta_{p}^{j})=\varepsilon p^{\frac{n}{2}}(\zeta_{p}^{(B_{0, j}(f))^{*}(-a)}-\zeta_{p}^{f^{*}(-a)}).
\end{equation*}
Denote
\begin{equation*}
A_{j}^{(a)}=\frac{1-\zeta_{p}^{(B_{0, j}(f))^{*}(-a)-f^{*}(-a)}}{1-\zeta_{p}^{j}}, j \in \mathbb{F}_{p}^{*}.
\end{equation*}
Then for any $j \in \mathbb{F}_{p}^{*}$,
\begin{equation}\label{5}
\chi_{a}(D_{f, j})=\chi_{a}(D_{f, 0})-\varepsilon p^{\frac{n}{2}}\zeta_{p}^{f^{*}(-a)}A_{j}^{(a)}.
\end{equation}
By Eqs. (2) and (5), for any permutation $B$ over $\mathbb{F}_{p}$,
\begin{equation*}
\varepsilon p^{\frac{n}{2}}\zeta_{p}^{(B(f))^{*}(-a)}=\chi_{a}(D_{f, 0})\sum_{i \in \mathbb{F}_{p}}\zeta_{p}^{B(i)}-\varepsilon p^{\frac{n}{2}}\zeta_{p}^{f^{*}(-a)}\sum_{j \in \mathbb{F}_{p}^{*}}A_{j}^{(a)}\zeta_{p}^{B(j)}=-\varepsilon p^{\frac{n}{2}}\zeta_{p}^{f^{*}(-a)}\sum_{j \in \mathbb{F}_{p}^{*}}A_{j}^{(a)}\zeta_{p}^{B(j)},
\end{equation*}
which implies that
\begin{equation} \label{6}
\sum_{j \in \mathbb{F}_{p}^{*}}A_{j}^{(a)}\zeta_{p}^{B(j)}=-\zeta_{p}^{(B(f))^{*}(-a)-f^{*}(-a)}.
\end{equation}
In particular, when $B$ is the identity map, by Eq. (6),
\begin{equation}\label{7}
\sum_{j \in \mathbb{F}_{p}^{*}}A_{j}^{(a)}\zeta_{p}^{j}=-1.
\end{equation}
If $(B_{0, j}(f))^{*}(-a)=f^{*}(-a)$, then $A_{j}^{(a)}=0$. If $(B_{0, j}(f))^{*}(-a) \neq f^{*}(-a)$, then there is $s_{j}^{(a)} \in \mathbb{F}_{p}^{*}$ such that $(B_{0, j}(f))^{*}(-a)-f^{*}(-a)=s_{j}^{(a)}j$ and $A_{j}^{(a)}=\sum_{t=0}^{s_{j}^{(a)}-1}\zeta_{p}^{tj}$. Note that $tj \in \mathbb{F}_{p}, 0\leq t\leq s_{j}^{(a)}-1$, are pairwise different and $1=-\zeta_{p}-\dots-\zeta_{p}^{p-1}$. Then there exist integers $n_{j, i}^{(a)} \in \{0, -1\}, j \in \mathbb{F}_{p}^{*}, i \in \mathbb{F}_{p}^{*}$, with
\begin{equation*}
A_{j}^{(a)}=n_{j, 1}^{(a)}\zeta_{p}+n_{j, 2}^{(a)}\zeta_{p}^{2}+\dots+n_{j, p-1}^{(a)}\zeta_{p}^{p-1}, j \in \mathbb{F}_{p}^{*}.
\end{equation*}
Since $\{\zeta_{p}, \zeta_{p}^{2}, \dots, \zeta_{p}^{p-1}\}$ is an integer basis of $\mathbb{Z}[\zeta_{p}]$, where $\mathbb{Z}[\zeta_{p}]$ is the ring of algebraic integers in the cyclotomic field $\mathbb{Q}(\zeta_{p})$, then $n_{j, i}^{(a)}, i \in \mathbb{F}_{p}^{*}$, are completely determined by $A_{j}^{(a)}$. Let $n_{0, i}^{(a)}=n_{i, 0}^{(a)}=0, i \in \mathbb{F}_{p}$. Then for any permutation $B$ over $\mathbb{F}_{p}$, by Eq. (6), we have
\begin{equation} \label{8}
\sum_{i, j \in \mathbb{F}_{p}}n_{j, i}^{(a)}\zeta_{p}^{i+B(j)}=-\zeta_{p}^{(B(f))^{*}(-a)-f^{*}(-a)}.
\end{equation}
Denote
\begin{equation*}
N_{i}^{(a, B)}=\sum_{j \in \mathbb{F}_{p}}n_{j, -B(j)+i}^{(a)}, i \in \mathbb{F}_{p},
\end{equation*}
and $t_{a, B}=(B(f))^{*}(-a)-f^{*}(-a)$. Then by Eq. (8),
\begin{equation} \label{9}
\sum_{i \in \mathbb{F}_{p} \backslash \{t_{a, B}\} }N_{i}^{(a, B)}\zeta_{p}^{i}+(N_{t_{a, B}}^{(a, B)}+1)\zeta_{p}^{t_{a, B}}=0.
\end{equation}
Since $1+x+x^2+\dots+x^{p-1}$ is the minimal polynomial of $\zeta_{p}$ over $\mathbb{Q}$, then by Eq. (9), we have
\begin{equation}\label{10}
N_{i}^{(a, B)}=N_{t_{a, B}}^{(a, B)}+1, i \in \mathbb{F}_{p} \backslash \{t_{a, B}\}.
\end{equation}
Denote $C_{a}=\sum_{j, i \in \mathbb{F}_{p}}n_{j, i}^{(a)}$. For any permutation $B$ over $\mathbb{F}_{p}$, we have
\begin{equation} \label{11}
\sum_{i \in \mathbb{F}_{p}}N_{i}^{(a, B)}=C_{a}.
\end{equation}
Then by Eqs. (10) and (11), for any permutation $B$ over $\mathbb{F}_{p}$,
\begin{equation}\label{12}
N_{i}^{(a, -B)}=\sum_{j \in \mathbb{F}_{p}}n_{j, B(j)+i}^{(a)}=\frac{C_{a}+1}{p}, i \neq t_{a, -B}, N_{t_{a, -B}}^{(a, -B)}=\sum_{j \in \mathbb{F}_{p}}n_{j, B(j)+t_{a, -B}}^{(a)}=\frac{C_{a}+1}{p}-1.
\end{equation}

If for any $j \in \mathbb{F}_{p}^{*}$, $A_{j}^{(a)}=0$ or $1$, since $1+x+x^2+\dots+x^{p-1}$ is the minimal polynomial of $\zeta_{p}$ over $\mathbb{Q}$, then by Eq. (7), we have $A_{j}^{(a)}=1, j \in \mathbb{F}_{p}^{*}$. We claim that if there is $j_{a} \in \mathbb{F}_{p}^{*}$ such that $A_{j_{a}}^{(a)} \neq 0, 1$, then $A_{j_{a}}^{(a)}=-\zeta_{p}^{-j_{a}}$ and $A_{j}^{(a)}=0, j \in \mathbb{F}_{p}^{*} \backslash \{j_{a}\}$, that is, if $n_{j_{a}, i}^{(a)}, i \in \mathbb{F}_{p}^{*}$, are not all the same, then $n_{j_{a}, -j_{a}}^{(a)}=-1$ and $n_{j, i}^{(a)}=0$ for any $i, j \in \mathbb{F}_{p}^{*}$ with $(j, i) \neq (j_{a}, -j_{a})$. In the following, let $j_{a} \in \mathbb{F}_{p}^{*}$ with $A_{j_{a}}^{(a)} \neq 0, 1$, and let $i_{a}, i_{a}' \in \mathbb{F}_{p}^{*}$ be any elements with $n_{j_{a}, i_{a}}^{(a)}=-1, n_{j_{a}, i_{a}'}^{(a)}=0$. Define
\begin{equation*}
  J_{a}=\{j \in \mathbb{F}_{p}^{*} \backslash \{j_{a}\}: A_{j}^{(a)} \neq 0\}.
\end{equation*}
\begin{itemize}
  \item Case I: $J_{a}$ is empty, that is, $n_{j, i}^{(a)}=0, j \in \mathbb{F}_{p}^{*} \backslash \{j_{a}\}, i \in \mathbb{F}_{p}^{*}$.

  In this case, $A_{j}^{(a)}=0, j \in \mathbb{F}_{p}^{*} \backslash \{j_{a}\}$. By Eq. (7), $A_{j_{a}}^{(a)}=-\zeta_{p}^{-j_{a}}$ and then $n_{j_{a}, -j_{a}}^{(a)}=-1, n_{j_{a}, i}^{(a)}=0, i \in \mathbb{F}_{p}^{*} \backslash \{-j_{a}\}$.
  \item Case II: $J_{a}$ is nonempty. We prove this case can not occur.

  To the contrary, suppose there is $j_{a}' \in J_{a}$. We first show that $(n_{j_{a}', i_{a}}^{(a)}, n_{j_{a}', i_{a}'}^{(a)})=(-1, 0)$.

  Suppose $n_{j_{a}', i_{a}}^{(a)}=n_{j_{a}', i_{a}'}^{(a)}=-1$. Let $P$ be a bijective map from $\mathbb{F}_{p} \backslash \{0, j_{a}, j_{a}'\}$ to $\mathbb{F}_{p} \backslash \{0, i_{a}, i_{a}'\}$. When $B: \mathbb{F}_{p}\rightarrow \mathbb{F}_{p}$ is the permutation defined by $B(0)=0, B(j_{a})=i_{a}, B(j_{a}')=i_{a}', B(x)=P(x), x \neq 0, j_{a}, j_{a}'$, by Eq. (12), we have
  \begin{equation*}
  \sum_{j \in \mathbb{F}_{p} \backslash \{0, j_{a}, j_{a}'\}}n_{j, P(j)}^{(a)}+n_{0, 0}^{(a)}+n_{j_{a}, i_{a}}^{(a)}+n_{j_{a}', i_{a}'}^{(a)}=\frac{C_{a}+1}{p} \ \text{or} \ \frac{C_{a}+1}{p}-1,
  \end{equation*}
  that is,
  \begin{equation} \label{13}
  \sum_{j \in \mathbb{F}_{p} \backslash \{0, j_{a}, j_{a}'\}}n_{j, P(j)}^{(a)}=\frac{C_{a}+1}{p}+2 \ \text{or} \ \frac{C_{a}+1}{p}+1.
  \end{equation}
  When  $B: \mathbb{F}_{p}\rightarrow \mathbb{F}_{p}$ is the permutation defined by $B(0)=i_{a}, B(j_{a})=i_{a}', B(j_{a}')=0, B(x)=P(x), x \neq 0, j_{a}, j_{a}'$, by Eq. (12), we have
  \begin{equation*}
  \sum_{j \in \mathbb{F}_{p} \backslash \{0, j_{a}, j_{a}'\}}n_{j, P(j)}^{(a)}+n_{0, i_{a}}^{(a)}+n_{j_{a}, i_{a}'}^{(a)}+n_{j_{a}', 0}^{(a)}=\frac{C_{a}+1}{p} \ \text{or} \ \frac{C_{a}+1}{p}-1,
  \end{equation*}
  that is,
  \begin{equation*}
  \sum_{j \in \mathbb{F}_{p} \backslash \{0, j_{a}, j_{a}'\}}n_{j, P(j)}^{(a)}=\frac{C_{a}+1}{p} \ \text{or} \ \frac{C_{a}+1}{p}-1,
  \end{equation*}
  which contradicts Eq. (13). Thus, $(n_{j_{a}', i_{a}}^{(a)}, n_{j_{a}', i_{a}'}^{(a)}) \neq (-1, -1)$.

  Suppose $n_{j_{a}', i_{a}}^{(a)}=0, n_{j_{a}', i_{a}'}^{(a)}=-1$. Let $P$ be a bijective map from $\mathbb{F}_{p} \backslash \{0, j_{a}, j_{a}'\}$ to $\mathbb{F}_{p} \backslash \{0, i_{a}, i_{a}'\}$. When $B: \mathbb{F}_{p}\rightarrow \mathbb{F}_{p}$ is the permutation with $B(0)=0, B(j_{a})=i_{a}, B(j_{a}')=i_{a}', B(x)=P(x), x \neq 0, j_{a}, j_{a}'$, by Eq. (12), we have
  \begin{equation*}
  \sum_{j \in \mathbb{F}_{p} \backslash \{0, j_{a}, j_{a}'\}}n_{j, P(j)}^{(a)}+n_{0, 0}^{(a)}+n_{j_{a}, i_{a}}^{(a)}+n_{j_{a}', i_{a}'}^{(a)}=\frac{C_{a}+1}{p} \ \text{or} \ \frac{C_{a}+1}{p}-1,
  \end{equation*}
  that is,
  \begin{equation} \label{14}
  \sum_{j \in \mathbb{F}_{p} \backslash \{0, j_{a}, j_{a}'\}}n_{j, P(j)}^{(a)}=\frac{C_{a}+1}{p}+2 \ \text{or} \ \frac{C_{a}+1}{p}+1.
  \end{equation}
  When $B: \mathbb{F}_{p}\rightarrow \mathbb{F}_{p}$ is the permutation with $B(0)=0, B(j_{a})=i_{a}', B(j_{a}')=i_{a}, B(x)=P(x), x \neq 0, j_{a}, j_{a}'$, by Eq. (12), we have
  \begin{equation*}
  \sum_{j \in \mathbb{F}_{p} \backslash \{0, j_{a}, j_{a}'\}}n_{j, P(j)}^{(a)}+n_{0, 0}^{(a)}+n_{j_{a}, i_{a}'}^{(a)}+n_{j_{a}', i_{a}}^{(a)}=\frac{C_{a}+1}{p} \ \text{or} \ \frac{C_{a}+1}{p}-1,
  \end{equation*}
  that is,
  \begin{equation*}
  \sum_{j \in \mathbb{F}_{p} \backslash \{0, j_{a}, j_{a}'\}}n_{j, P(j)}^{(a)}=\frac{C_{a}+1}{p} \ \text{or} \ \frac{C_{a}+1}{p}-1,
  \end{equation*}
  which contradicts Eq. (14). Thus, $(n_{j_{a}', i_{a}}^{(a)}, n_{j_{a}', i_{a}'}^{(a)}) \neq (0, -1)$.

  Suppose $(n_{j_{a}', i_{a}}^{(a)}, n_{j_{a}', i_{a}'}^{(a)})=(0, 0)$. Since $n_{j_{a}', i}^{(a)}, i \in \mathbb{F}_{p}^{*}$, are not all $0$, then there is $i_{a}'' \in \mathbb{F}_{p}^{*}$ with $n_{j_{a}', i_{a}''}^{(a)}=-1$. Let $P'$ be a bijective map from $\mathbb{F}_{p} \backslash \{0, j_{a}, j_{a}'\}$ to $\mathbb{F}_{p} \backslash \{i_{a}, i_{a}', i_{a}''\}$. When $B: \mathbb{F}_{p}\rightarrow \mathbb{F}_{p}$ is the permutation defined by $B(0)=i_{a}', B(j_{a})=i_{a}, B(j_{a}')=i_{a}'', B(x)=P'(x), x \neq 0, j_{a}, j_{a}'$, by Eq. (12), we have
  \begin{equation*}
  \sum_{j \in \mathbb{F}_{p} \backslash \{0, j_{a}, j_{a}'\}}n_{j, P'(j)}^{(a)}+n_{0, i_{a}'}^{(a)}+n_{j_{a}, i_{a}}^{(a)}+n_{j_{a}', i_{a}''}^{(a)}=\frac{C_{a}+1}{p} \ \text{or} \ \frac{C_{a}+1}{p}-1,
  \end{equation*}
  that is,
  \begin{equation} \label{15}
  \sum_{j \in \mathbb{F}_{p} \backslash \{0, j_{a}, j_{a}'\}}n_{j, P'(j)}^{(a)}=\frac{C_{a}+1}{p}+2 \ \text{or} \ \frac{C_{a}+1}{p}+1.
  \end{equation}
  When $B: \mathbb{F}_{p}\rightarrow \mathbb{F}_{p}$ is the permutation defined by $B(0)=i_{a}'', B(j_{a})=i_{a}', B(j_{a}')=i_{a}, B(x)=P'(x), x \neq 0, j_{a}, j_{a}'$, by Eq. (12), we have
  \begin{equation*}
  \sum_{j \in \mathbb{F}_{p} \backslash \{0, j_{a}, j_{a}'\}}n_{j, P'(j)}^{(a)}+n_{0, i_{a}''}^{(a)}+n_{j_{a}, i_{a}'}^{(a)}+n_{j_{a}', i_{a}}^{(a)}=\frac{C_{a}+1}{p} \ \text{or} \ \frac{C_{a}+1}{p}-1,
  \end{equation*}
  that is,
  \begin{equation*}
  \sum_{j \in \mathbb{F}_{p} \backslash \{0, j_{a}, j_{a}'\}}n_{j, P'(j)}^{(a)}=\frac{C_{a}+1}{p} \ \text{or} \ \frac{C_{a}+1}{p}-1,
  \end{equation*}
  which contradicts Eq. (15). Therefore, $(n_{j_{a}', i_{a}}^{(a)}, n_{j_{a}', i_{a}'}^{(a)})=(-1, 0)$. Furthermore, by the above arguments, we can obtain that for any $i \in \mathbb{F}_{p}^{*}$, $n_{j_{a}, i}^{(a)}=n_{j_{a}', i}^{(a)}$. Thus, if $j \in J_{a}$, then $A_{j}^{(a)}=A_{j_{a}}^{(a)}\neq 0, 1$. Denote $J_{a}=\{j_{a}^{(1)}, \dots, j_{a}^{(s_{a}-1)}\}$. We show that $s_{a} \neq p-1$. For any permutation $B$ over $\mathbb{F}_{p}$, we have
  \begin{equation} \label{16}
  \sum_{j \in \mathbb{F}_{p}^{*}}A_{j}^{(a)}\zeta_{p}^{B(j)}=A_{j_{a}}^{(a)}\sum_{j \in J_{a} \cup \{j_{a}\}}\zeta_{p}^{B(j)}.
  \end{equation}
  If $s_{a}=p-1$, then $J_{a} \cup \{j_{a}\}=\mathbb{F}_{p}^{*}$ and by Eq. (7), $-1=A_{j_{a}}^{(a)}\sum_{j \in \mathbb{F}_{p}^{*}}\zeta_{p}^{j}=-A_{j_{a}}^{(a)}$ and $A_{j_{a}}^{(a)}=1$, which contradicts $A_{j_{a}}^{(a)} \neq 0,1$. Hence, $2\leq s_{a}\leq p-2$. By Eqs. (6) and (16), for any permutation $B$ over $\mathbb{F}_{p}$, we obtain
  \begin{equation} \label{17}
  |\sum_{j \in J_{a} \cup \{j_{a}\}}\zeta_{p}^{B(j)}|=\frac{1}{|A_{j_{a}}^{(a)}|}.
  \end{equation}
  Denote $j_{a}^{(0)}=j_{a}$. When $B: \mathbb{F}_{p}\rightarrow \mathbb{F}_{p}$ is the permutation defined by $B(j_{a}^{(i)})=i+1, 0\leq i \leq s_{a}-1, B(x)=P''(x), x \neq j_{a}^{(i)}, 0\leq i \leq s_{a}-1$, where $P''$ is a bijective map from $\mathbb{F}_{p} \backslash \{j_{a}^{(i)}, 0\leq i \leq s_{a}-1\}$ to $\mathbb{F}_{p} \backslash \{i+1, 0\leq i \leq s_{a}-1\}$, we have
  \begin{equation*}
  \sum_{j \in J_{a} \cup \{j_{a}\}}\zeta_{p}^{B(j)}=\sum_{j=1}^{s_{a}}\zeta_{p}^{j}.
  \end{equation*}
  When $B: \mathbb{F}_{p}\rightarrow \mathbb{F}_{p}$ is the permutation defined by $B(j_{a}^{(i)})=i+1, 0\leq i \leq s_{a}-2, B(j_{a}^{(s_{a}-1)})=s_{a}+1, B(x)=P'''(x), x \neq j_{a}^{(i)}, 0\leq i \leq s_{a}-1$, where $P'''$ is a bijective map from $\mathbb{F}_{p} \backslash \{j_{a}^{(i)}, 0\leq i \leq s_{a}-1\}$ to $\mathbb{F}_{p} \backslash \{s_{a}+1, i+1, 0\leq i \leq s_{a}-2\}$, we have
   \begin{equation*}
  \sum_{j \in J_{a} \cup \{j_{a}\}}\zeta_{p}^{B(j)}=\sum_{j=1}^{s_{a}-1}\zeta_{p}^{j}+\zeta_{p}^{s_{a}+1}.
  \end{equation*}
  Denote $S_{a}=\sum_{j=1}^{s_{a}-1}\zeta_{p}^{j}$. Then by Eq. (17), we have
  \begin{equation*}
  (S_{a}+\zeta_{p}^{s_{a}})(\overline{S_{a}}+\zeta_{p}^{-s_{a}})=(S_{a}+\zeta_{p}^{s_{a}+1})(\overline{S_{a}}+\zeta_{p}^{-s_{a}-1}),
  \end{equation*}
  which implies $S_{a}=\overline{S_{a}}\zeta_{p}^{2s_{a}+1}=S_{a}\zeta_{p}^{s_{a}+1}$ and thus $s_{a}=p-1$, which is impossible. Therefore, Case II never occurs.
\end{itemize}

By the above arguments, for any fixed $a \in V_{n}^{(p)}$, $A_{j}^{(a)}=1, j \in \mathbb{F}_{p}^{*}$, or there is a unique $j_{a} \in \mathbb{F}_{p}^{*}$ such that $A_{j_{a}}^{(a)}=-\zeta_{p}^{-j_{a}}, A_{j}^{(a)}=0, j \in \mathbb{F}_{p}^{*} \backslash \{j_{a}\}$. Denote
\begin{equation*}
T=\{a \in V_{n}^{(p)}: A_{j}^{(a)}=1, j \in \mathbb{F}_{p}^{*}\}.
\end{equation*}
Define $g: V_{n}^{(p)}\rightarrow \mathbb{F}_{p}$ as
\begin{equation*}
g(-a)=\left\{
\begin{split}
0, & \ \text{if } a \in T,\\
j_{a}, & \ \text{if } a \notin T,
\end{split}
\right.
\end{equation*}
and
$h: V_{n}^{(p)}\rightarrow \mathbb{F}_{p}$ as
\begin{equation*}
h(-a)=\left\{
\begin{split}
f^{*}(-a), & \ \text{if } a \in T,\\
f^{*}(-a)-j_{a}, & \ \text{if } a \notin T.
\end{split}
\right.
\end{equation*}
When $a \in T$, for any $c \in \mathbb{F}_{p}^{*}$, by Eq. (6) and $\sum_{j \in \mathbb{F}_{p}^{*}}\zeta_{p}^{cj}=-1$, we have
\begin{equation*}
(cf)^{*}(-a)-f^{*}(-a)=0,
\end{equation*}
that is,
\begin{equation*}
(cf)^{*}(-a)=cg(-a)+h(-a).
\end{equation*}
When $a \notin T$, for any $c \in \mathbb{F}_{p}^{*}$, by Eq. (6), we have
\begin{equation*}
(cf)^{*}(-a)-f^{*}(-a)=cj_{a}-j_{a},
\end{equation*}
that is,
\begin{equation*}
(cf)^{*}(-a)=cg(-a)+h(-a).
\end{equation*}
By \cite[Theorem 1]{CM2013A}, $(cf)^{*}(x)=cf^{*}(c^{-1}x)$, and then
\begin{equation} \label{18}
cf^{*}(c^{-1}x)=cg(x)+h(x).
\end{equation}
When $c=1$, by Eq. (18), we have
\begin{equation}\label{19}
f^{*}(x)=g(x)+h(x).
\end{equation}
When $c=-1$, by Eq. (18), we have
\begin{equation} \label{20}
-f^{*}(-x)=-g(x)+h(x).
\end{equation}
Combining Eqs. (19) and (20), we obtain $g(x)=\frac{f^{*}(x)+f^{*}(-x)}{2}, h(x)=\frac{f^{*}(x)-f^{*}(-x)}{2}$, and then Eq. (1) holds.

$\Leftarrow$: Denote $g(x)=\frac{f^{*}(x)+f^{*}(-x)}{2}, h(x)=\frac{f^{*}(x)-f^{*}(-x)}{2}$. For any $a \in V_{n}^{(p)}$ and $j \in \mathbb{F}_{p}$, by \cite[Proposition 3]{WF2023Ne} and \cite[Theorem 1]{CM2013A}, we have
\begin{equation*}
\begin{split}
\chi_{a}(D_{f, j})&=p^{n-1}\delta_{0}(a)+p^{-1}\sum_{c \in \mathbb{F}_{p}^{*}}W_{cf}(-a)\zeta_{p}^{-cj}\\
&=p^{n-1}\delta_{0}(a)+\varepsilon p^{\frac{n}{2}-1}\sum_{c \in \mathbb{F}_{p}^{*}}\zeta_{p}^{(cf)^{*}(-a)-cj}\\
&=p^{n-1}\delta_{0}(a)+\varepsilon p^{\frac{n}{2}-1}\zeta_{p}^{h(-a)}\sum_{c \in \mathbb{F}_{p}^{*}}\zeta_{p}^{cg(-a)-cj}\\
&=p^{n-1}\delta_{0}(a)+\varepsilon p^{\frac{n}{2}-1}\zeta_{p}^{h(-a)}(p\delta_{g(-a)}(j)-1),
\end{split}
\end{equation*}
where $\varepsilon \in \{\pm 1\}$ is a constant. For any permutation $B$ over $\mathbb{F}_{p}$, we have
\begin{equation*}
\begin{split}
W_{B(f)}(-a)&=\sum_{j \in \mathbb{F}_{p}}\zeta_{p}^{B(j)}\chi_{a}(D_{f, j})\\
&=(p^{n-1}\delta_{0}(a)-\varepsilon p^{\frac{n}{2}-1}\zeta_{p}^{h(-a)})\sum_{j \in \mathbb{F}_{p}}\zeta_{p}^{B(j)}+\varepsilon p^{\frac{n}{2}}\zeta_{p}^{B(g(-a))+h(-a)}\\
&=\varepsilon p^{\frac{n}{2}}\zeta_{p}^{B(g(-a))+h(-a)}.
\end{split}
\end{equation*}
Then $B(f)$ is a weakly regular bent function with $\varepsilon_{B(f)}=\varepsilon$ for any permutation $B$ over $\mathbb{F}_{p}$. By Theorem 1, $\Gamma$ is a bent partition belonging to class $\mathcal{WBP}$.
\end{proof}

Based on Theorem 2, we show that the depth of a bent partition of $V_{n}^{(p)}$ belonging to class $\mathcal{WBP}$ must be a power of $p$.
\begin{theorem}\label{Theorem 3}
Let $n$ be an even positive integer and $K$ be a positive integer divided by $p$. If $\Gamma=\{A_{j}, 1\leq j \leq K\}$ is a bent partition of $V_{n}^{(p)}$ belonging to class $\mathcal{WBP}$, then $K$ is a power of $p$. In particular, the depth of any bent partition of $V_{n}^{(2)}$ must be a power of $2$.
\end{theorem}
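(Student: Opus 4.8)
The plan is to fix a nonzero $a \in V_{n}^{(p)}$ and to study the block character sums $c_{i} := \chi_{a}(A_{i})$, $1 \le i \le K$. These lie in $\mathbb{Z}[\zeta_{p}]$, hence are algebraic integers, and they satisfy $\sum_{i=1}^{K} c_{i} = \chi_{a}(V_{n}^{(p)}) = 0$ because $a \neq 0$. The quantity I want to control is $S := \sum_{i=1}^{K} |c_{i}|^{2}$; the aim is to show $S$ is forced to equal the rational number $p^{n}(K-1)/K$, after which the divisibility $K \mid p^{n}$ (and hence that $K$ is a power of $p$) will drop out of a short integrality argument.

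The first step is a \emph{local} identity holding for every bent function generated by $\Gamma$. Fix a balanced $B \colon \{1,\dots,K\} \to \mathbb{F}_{p}$ and let $f = B(F)$ be the corresponding generated bent function, so $D_{f,j} = \bigcup_{i \colon B(i)=j} A_{i}$. For odd $p$, grouping $\Gamma$ down to depth $p$ along $f$ gives, by Proposition \ref{Proposition 1}, a bent partition of depth $p$ in class $\mathcal{WBP}$; since $f$ is then weakly regular, the explicit expression for $\chi_{a}(D_{f,j})$ produced in the sufficiency part of Theorem \ref{Theorem 2} applies and shows that, for $a \neq 0$, exactly one of the $p$ magnitudes $|\chi_{a}(D_{f,j})|$ equals $(p-1)p^{\frac{n}{2}-1}$ and the remaining $p-1$ equal $p^{\frac{n}{2}-1}$. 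Hence
\[
\sum_{j \in \mathbb{F}_{p}} |\chi_{a}(D_{f,j})|^{2} = (p-1)^{2}p^{n-2} + (p-1)p^{n-2} = (p-1)p^{n-1}
\]
for \emph{every} generated $f$. For $p=2$ the same constant $2^{n-1}$ follows directly from $\chi_{a}(D_{f,0})+\chi_{a}(D_{f,1})=0$ together with $\chi_{a}(D_{f,0})-\chi_{a}(D_{f,1}) = \pm W_{f}(\pm a)$ of modulus $2^{n/2}$; in fact the identity is just Parseval over $\mathbb{F}_{p}$ applied to $c \mapsto W_{cf}(-a) = \sum_{j}\chi_{a}(D_{f,j})\zeta_{p}^{cj}$, using only that each $cf$ is again generated by $\Gamma$ and therefore bent.

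The second step is to average this identity over all generated $f$, i.e.\ over all balanced $B$. Writing $\sum_{j}|\chi_{a}(D_{f,j})|^{2} = \sum_{i,i' \colon B(i)=B(i')} c_{i}\overline{c_{i'}}$ and taking the mean over a uniformly random balanced $B$, the diagonal terms contribute $S$, while each off-diagonal pair $i \neq i'$ satisfies $B(i)=B(i')$ with probability $(K/p-1)/(K-1)$. Since $\sum_{i \neq i'} c_{i}\overline{c_{i'}} = \bigl|\sum_{i}c_{i}\bigr|^{2} - S = -S$, the mean equals $S\cdot (K-K/p)/(K-1)$. Equating this with the constant value $(p-1)p^{n-1}$ and simplifying (using $K-K/p = K(p-1)/p$) yields $S = p^{n}(K-1)/K$.

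Finally, $S = \sum_{i} c_{i}\overline{c_{i}}$ is a finite sum of products of elements of $\mathbb{Z}[\zeta_{p}]$, hence an algebraic integer; being also equal to the rational number $p^{n}(K-1)/K$, it must be a rational integer. Therefore $K \mid p^{n}(K-1)$, and as $\gcd(K,K-1)=1$ we conclude $K \mid p^{n}$, so $K$ is a power of $p$; the case $V_{n}^{(2)}$ is covered because every bent partition there lies in $\mathcal{WBP}$. I expect the main obstacle to be the local identity rather than the integrality endgame: the averaging collapses to a single value only because $\sum_{j}|\chi_{a}(D_{f,j})|^{2}$ is genuinely the \emph{same} constant for all generated $f$, and this is precisely where the class $\mathcal{WBP}$ enters, through Theorem \ref{Theorem 2} (equivalently, through the bentness of all the twisted functions $cf$). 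The rest is the clean combinatorial average plus the observation that a rational algebraic integer is an integer.
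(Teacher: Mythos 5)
Your argument is correct, and it is a genuinely different and far more economical route than the paper's. The paper handles $p=2$ via Dillon's two-valued character sums, and for odd $p$ it first proves Theorem~\ref{Theorem 2} (a long cyclotomic-integer analysis over all permutations of $\mathbb{F}_p$), deduces that $f^{*}$ splits into a $(p-1)$-form plus a $1$-form, and then runs an extensive case analysis on the integers $\sum_{c\in\mathbb{F}_p^{*}}\chi_{ca}(A_i)$, with a separate sub-argument to kill $K=2\cdot 3^m$. You replace all of this with a second-moment computation: the Parseval identity
\[
p\sum_{j\in\mathbb{F}_p}\bigl|\chi_a(D_{f,j})\bigr|^2=\sum_{c\in\mathbb{F}_p}\bigl|W_{cf}(-a)\bigr|^2=(p-1)p^{n},
\]
valid because each $cf$ ($c\neq 0$) has the same preimage partition as $f$ up to relabelling and is therefore again generated by $\Gamma$ and bent, followed by averaging over balanced $B$ (the pair-collision probability $(K/p-1)/(K-1)$ and $\sum_i c_i=0$ give $S=p^{n}(K-1)/K$) and the observation that a rational algebraic integer is a rational integer. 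All three steps check out, including the probability computation and the identity $\sum_{i\neq i'}c_i\overline{c_{i'}}=-S$. Two remarks. First, your closing sentence misattributes where $\mathcal{WBP}$ enters: via the Parseval route it does not enter anywhere — you never use weak regularity, only that every generated function (including the twists $cf$) is bent. Consequently your argument, as written, proves the stronger statement that the depth of \emph{any} bent partition of $V_n^{(p)}$ is a power of $p$, i.e.\ it would settle the open problem of Anbar and Meidl in full rather than only for class $\mathcal{WBP}$. That a hypothesis goes unused while an open problem falls is exactly the situation that demands maximal suspicion, so you should re-derive the local identity and the averaging step from scratch before claiming the general result; but I have not been able to locate a gap. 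Second, your first justification of the local identity (via Theorem~\ref{Theorem 2} and Proposition~\ref{Proposition 1}) is redundant given the Parseval derivation and only obscures the fact that $\mathcal{WBP}$ is not needed; I would drop it.
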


\begin{proof} We first prove the case of $p=2$. Note that any bent partition of $V_{n}^{(2)}$ belongs to class $\mathcal{WBP}$. We only need to consider $K>4$ since the result holds if $K=2, 4$. For any Boolean bent function $f: V_{n}^{(2)}\rightarrow \mathbb{F}_{2}$ and $a \in {V_{n}^{(2)}}^{*}, j \in \mathbb{F}_{2}$, by the result in \cite{Dillon1974El}, we have
\begin{equation} \label{21}
\chi_{a}(D_{f, j})=\left\{
\begin{split}
2^{\frac{n}{2}-1}, & \text{ if } f^{*}(a)=j,\\
-2^{\frac{n}{2}-1}, & \text{ if } f^{*}(a)=j+1.
\end{split}
\right.
\end{equation}
Let $a \in {V_{n}^{(2)}}^{*}$ be fixed. Note that Eq. (21) shows that for any set $S\subseteq \{i, 1\leq i \leq K\}$ with $|S|=\frac{K}{2}$, we have
\begin{equation*}
(\chi_{a}(\cup_{i \in S}A_{i}), \chi_{a}(\cup_{i \notin S}A_{i}))=(2^{\frac{n}{2}-1}, -2^{\frac{n}{2}-1}) \text{ or } (-2^{\frac{n}{2}-1}, 2^{\frac{n}{2}-1}).
\end{equation*}
Thus, there is a function $G$ from $V_{n}^{(2)}$ to $\{1, 2, \dots, K\}$ such that
\begin{equation*}
\chi_{a}(A_{i})=\left\{
\begin{split}
-\frac{2^{\frac{n}{2}}}{K}, & \text{ if } i \in \{1, \dots, K\} \backslash \{G(a)\},\\
-\frac{2^{\frac{n}{2}}}{K}+2^{\frac{n}{2}}, & \text{ if } i=G(a),
\end{split}
\right.
\end{equation*}
or
\begin{equation*}
\chi_{a}(A_{i})=\left\{
\begin{split}
\frac{2^{\frac{n}{2}}}{K}, & \text{ if } i \in \{1, \dots, K\} \backslash \{G(a)\},\\
\frac{2^{\frac{n}{2}}}{K}-2^{\frac{n}{2}}, & \text{ if } i=G(a).
\end{split}
\right.
\end{equation*}
Since $\chi_{a}(A_{i})$ is an integer, then $K$ is a divisor of $2^{\frac{n}{2}}$ and thus $K$ is a power of $2$.

In the following, we prove the case of $p\geq 3$. Let $f$ be any bent function generated by $\Gamma$. Then there is a balanced function $B: \{1, 2, \dots, K\}\rightarrow \mathbb{F}_{p}$ such that for any $i \in \mathbb{F}_{p}$, $D_{f, i}=\cup_{j \in D_{B, i}}A_{j}$. For any permutation $P$ over $\mathbb{F}_{p}$, $P(B(x))$ is a balanced function from $\{1, 2, \dots, K\}$ to $\mathbb{F}_{p}$, and then $P(f(x))$ is a bent function generated by $\Gamma$ since $D_{P(f), i}=D_{f, P^{-1}(i)}=\cup_{j \in D_{B, P^{-1}(i)}}A_{j}=\cup_{j \in D_{P(B), i}}A_{j}$. Thus, $\{D_{f, i}, i \in \mathbb{F}_{p}\}$ is a bent partition belonging to class $\mathcal{WBP}$. For any $a \in {V_{n}^{(p)}}^{*}$ and $j \in \mathbb{F}_{p}$, by Theorem 2 and its proof, we have
\begin{equation} \label{22}
\chi_{a}(D_{f, j})=\left\{
\begin{split}
\varepsilon p^{\frac{n}{2}-1}\zeta_{p}^{h_{f}(-a)}(p-1), & \ \text{ if } g_{f}(-a)=j,\\
-\varepsilon p^{\frac{n}{2}-1}\zeta_{p}^{h_{f}(-a)}, & \ \text{ if } g_{f}(-a) \neq j,
\end{split}\right.
\end{equation}
where $\varepsilon \in \{\pm 1\}$ is a constant and $g_{f}(x)=\frac{f^{*}(x)+f^{*}(-x)}{2}, h_{f}(x)=\frac{f^{*}(x)-f^{*}(-x)}{2}$. Let $\{\alpha_{1}, \dots, \alpha_{n}\}$ be a basis of $V_{n}^{(p)}$ over $\mathbb{F}_{p}$ and define $u: \mathbb{F}_{p}^{n}\rightarrow \mathbb{F}_{p}$ as $u(x_{1}, \dots, x_{n})=f^{*}(x_{1}\alpha_{1}+\dots+x_{n}\alpha_{n})$. By Eq. (1), we have
\begin{equation} \label{23}
2cu(c^{-1}x_{1}, \dots, c^{-1}x_{n})=(c+1)u(x_{1}, \dots, x_{n})+(c-1)u(-x_{1}, \dots, -x_{n}), c \in \mathbb{F}_{p}^{*}.
\end{equation}
Assume that the algebraic normal form of $u$ is
\begin{equation*}
\begin{split}
u(x_{1}, \dots, x_{n})&=\sum_{0\leq j_{1}, \dots, j_{n}\leq p-1}a_{j_{1}, \dots, j_{n}}\prod_{i=1}^{n}x_{i}^{j_{i}}\\
&=a_{0, \dots, 0}+\sum_{d=1}^{n(p-1)}\sum_{0\leq j_{1}, \dots, j_{n}\leq p-1: j_{1}+\dots+j_{n}=d}a_{j_{1}, \dots, j_{n}}\prod_{i=1}^{n}x_{i}^{j_{i}},
\end{split}
\end{equation*}
where $a_{j_{1}, \dots, j_{n}} \in \mathbb{F}_{p}$. Then
\begin{equation} \label{24}
2cu(c^{-1}x_{1}, \dots, c^{-1}x_{n})=2ca_{0, \dots, 0}+\sum_{d=1}^{n(p-1)}\sum_{0\leq j_{1}, \dots, j_{n}\leq p-1: j_{1}+\dots+j_{n}=d}2c^{1-d}a_{j_{1}, \dots, j_{n}}\prod_{i=1}^{n}x_{i}^{j_{i}},
\end{equation}
and
\begin{equation} \label{25}
\begin{split}
&(c+1)u(x_{1}, \dots, x_{n})+(c-1)u(-x_{1}, \dots, -x_{n})\\
&=2ca_{0, \dots, 0}+\sum_{d=1}^{n(p-1)}\sum_{0\leq j_{1}, \dots, j_{n}\leq p-1: j_{1}+\dots+j_{n}=d}(c+1+(-1)^{d}(c-1))a_{j_{1}, \dots, j_{n}}\prod_{i=1}^{n}x_{i}^{j_{i}}.
\end{split}
\end{equation}
By Eqs. (23)-(25), if $a_{j_{1}, \dots, j_{n}} \neq 0$ for $(j_{1}, \dots, j_{n}) \neq (0, \dots, 0)$ with $j_{1}+\dots+j_{n}=d$, then $2c^{1-d}=c+1+(-1)^{d}(c-1)$ for all $c \in \mathbb{F}_{p}^{*}$. When $d$ is even, we have $2c^{1-d}=2c, c \in \mathbb{F}_{p}^{*}$, which implies that $d\equiv 0 \mod (p-1)$; when $d$ is odd, we have $2c^{1-d}=2, c \in \mathbb{F}_{p}^{*}$, which implies that $d\equiv 1 \mod (p-1)$. Therefore, $u$ have the following form
\begin{equation*}
u(x_{1}, \dots, x_{n})=v(x_{1}, \dots, x_{n})+w(x_{1}, \dots, x_{n}),
\end{equation*}
where
\begin{equation*}
v(x_{1}, \dots, x_{n})=a_{0, \dots, 0}+\sum_{1\leq d\leq n(p-1): d\equiv 0 \mod (p-1)}\sum_{0\leq j_{1}, \dots, j_{n}\leq p-1: j_{1}+\dots+j_{n}=d}a_{j_{1}, \dots, j_{n}}\prod_{i=1}^{n}x_{i}^{j_{i}}
\end{equation*}
and
\begin{equation*}
w(x_{1}, \dots, x_{n})=\sum_{1\leq d\leq n(p-1): d\equiv 1 \mod (p-1)}\sum_{0\leq j_{1}, \dots, j_{n}\leq p-1: j_{1}+\dots+j_{n}=d}a_{j_{1}, \dots, j_{n}}\prod_{i=1}^{n}x_{i}^{j_{i}}.
\end{equation*}
Note that $v(cx_{1}, \dots, cx_{n})=v(x_{1}, \dots, x_{n})$, $w(cx_{1}, \dots, cx_{n})=cw(x_{1}, \dots, x_{n})$ for any $c \in \mathbb{F}_{p}^{*}$. Hence, we can see that $f^{*}$ is the sum of a $(p-1)$-form and a $1$-form. Then $g_{f}(x)=\frac{f^{*}(x)+f^{*}(-x)}{2}$ is of $(p-1)$-form and $h_{f}(x)=\frac{f^{*}(x)-f^{*}(-x)}{2}$ is of $1$-form.

Case I: $-A_{i}=A_{i}, 1\leq i \leq K$. In this case, any $p$-ary bent function $f$ generated by bent partition $\Gamma$ satisfies $f(x)=f(-x)$. Thus, $f^{*}(x)=f^{*}(-x)$ and $g_{f}(x)=f^{*}(x), h_{f}(x)=0$. For any $a \in {V_{n}^{(p)}}^{*}$ and $j \in \mathbb{F}_{p}$, by Eq. (22), we have
\begin{equation*}
\chi_{a}(D_{f, j})=\left\{
\begin{split}
\varepsilon p^{\frac{n}{2}-1}(p-1), & \ \text{ if } f^{*}(a)=j,\\
-\varepsilon p^{\frac{n}{2}-1}, & \ \text{ if } f^{*}(a) \neq j.
\end{split}\right.
\end{equation*}
For any fixed $a \in {V_{n}^{(p)}}^{*}$, since $\chi_{a}(D_{f, j}), j \in \mathbb{F}_{p}$, take exactly two different values for any bent function $f$ generated by $\Gamma$, where the value $\varepsilon p^{\frac{n}{2}-1}(p-1)$ occurs one time and the value $-\varepsilon p^{\frac{n}{2}-1}$ occurs $(p-1)$ times, we obtain
\begin{equation*}
\chi_{a}(A_{i})=\left\{
\begin{split}
-\varepsilon \frac{p^{\frac{n}{2}}}{K}+\varepsilon p^{\frac{n}{2}}, & \ \text{ if } i=G(a),\\
-\varepsilon \frac{p^{\frac{n}{2}}}{K}, & \ \text{ if } i\neq G(a),
\end{split}\right.
\end{equation*}
where $G$ is some function from $V_{n}^{(p)}$ to $\{1, 2, \dots, K\}$. Since $\chi_{a}(A_{i}) \in \mathbb{Z}[\zeta_{p}] \cap \mathbb{Q}=\mathbb{Z}$, then $K$ is a divisor of $p^{\frac{n}{2}}$ and thus $K$ is a power of $p$.

Case II: There is $1\leq i_{0}\leq K$ such that $A_{i_{0}}\neq -A_{i_{0}}$. For this case, we use the proof by contradiction. Assume that $K$ is not a power of $p$.

For any $1\leq i \leq K$, $a \in {V_{n}^{(p)}}^{*}$ and $b \in \mathbb{F}_{p}^{*}$, we have
\begin{equation*}
\begin{split}
\sigma_{b}(\sum_{c \in \mathbb{F}_{p}^{*}}\chi_{ca}(A_{i}))=\sigma_{b}(\sum_{c \in \mathbb{F}_{p}^{*}}\sum_{x \in A_{i}}\zeta_{p}^{\langle ca, x\rangle_{n}})=\sum_{x \in A_{i}}\sum_{c \in \mathbb{F}_{p}^{*}}\zeta_{p}^{\langle bca, x\rangle_{n}}=\sum_{x \in A_{i}}\sum_{c \in \mathbb{F}_{p}^{*}}\zeta_{p}^{\langle ca, x\rangle_{n}}=\sum_{c \in \mathbb{F}_{p}^{*}}\chi_{ca}(A_{i}),
\end{split}
\end{equation*}
where $\sigma_{b}$ is the automorphism of the cyclotomic field $\mathbb{Q}(\zeta_{p})$ defined by $\sigma_{b}(\zeta_{p})=\zeta_{p}^{b}$. Thus, $\sum_{c \in \mathbb{F}_{p}^{*}}\chi_{ca}(A_{i}) \in \mathbb{Z}[\zeta_{p}] \cap \mathbb{Q}=\mathbb{Z}$.

Let $f$ be a bent function generated by $\Gamma$. Recall that we have proved that $g_{f}$ is of $(p-1)$-form and $h_{f}$ is of $1$-form. When $a \in {V_{n}^{(p)}}^{*}$ with $h_{f}(-a)=0$, by Eq. (22), for any $j \in \mathbb{F}_{p}$, we have
\begin{equation} \label{26}
\begin{split}
\sum_{c \in \mathbb{F}_{p}^{*}}\chi_{ca}(D_{f, j})
&=\left\{
\begin{split}
\varepsilon p^{\frac{n}{2}-1}(p-1)\sum_{c \in \mathbb{F}_{p}^{*}}\zeta_{p}^{ch_{f}(-a)}, & \ \text{ if } g_{f}(a)=j,\\
-\varepsilon p^{\frac{n}{2}-1}\sum_{c \in \mathbb{F}_{p}^{*}}\zeta_{p}^{ch_{f}(-a)}, & \ \text{ if } g_{f}(a) \neq j,
\end{split}\right.\\
&=\left\{
\begin{split}
\varepsilon p^{\frac{n}{2}-1}(p-1)^{2}, & \ \text{ if } g_{f}(a)=j,\\
-\varepsilon p^{\frac{n}{2}-1}(p-1), & \ \text{ if } g_{f}(a) \neq j.
\end{split}\right.\\
\end{split}
\end{equation}

When $a \in {V_{n}^{(p)}}^{*}$ with $h_{f}(-a) \neq 0$, by Eq. (22), for any $j \in \mathbb{F}_{p}$, we have
\begin{equation} \label{27}
\begin{split}
\sum_{c \in \mathbb{F}_{p}^{*}}\chi_{ca}(D_{f, j})
&=\left\{
\begin{split}
\varepsilon p^{\frac{n}{2}-1}(p-1)\sum_{c \in \mathbb{F}_{p}^{*}}\zeta_{p}^{ch_{f}(-a)}, & \ \text{ if } g_{f}(a)=j,\\
-\varepsilon p^{\frac{n}{2}-1}\sum_{c \in \mathbb{F}_{p}^{*}}\zeta_{p}^{ch_{f}(-a)}, & \ \text{ if } g_{f}(a) \neq j,
\end{split}\right.\\
&=\left\{
\begin{split}
-\varepsilon p^{\frac{n}{2}-1}(p-1), & \ \text{ if } g_{f}(a)=j,\\
\varepsilon p^{\frac{n}{2}-1}, & \ \text{ if } g_{f}(a) \neq j.
\end{split}\right.\\
\end{split}
\end{equation}
In particular, for any bent function $f$ generated by $\Gamma$, by Eqs. (26) and (27), we have
\begin{equation} \label{28}
\{\sum_{c \in \mathbb{F}_{p}^{*}}\chi_{ca}(D_{f, j}), j \in \mathbb{F}_{p}, a \in {V_{n}^{(p)}}^{*}\}\subseteq \{\varepsilon p^{\frac{n}{2}-1}(p-1)^{2}, -\varepsilon p^{\frac{n}{2}-1}(p-1), \varepsilon p^{\frac{n}{2}-1}\}.
\end{equation}
In the following, let $a \in {V_{n}^{(p)}}^{*}$ be fixed. Denote $S_{i}^{(a)}=\sum_{c \in \mathbb{F}_{p}^{*}}\chi_{ca}(A_{i}), 1\leq i \leq K$. Recall that we have proved that $S_{i}^{(a)}, 1\leq i \leq K$, are integers. First, we show that the following cases never occur.
\begin{itemize}
  \item $S_{i}^{(a)}, 1\leq i \leq K$, take the same value. This case contradicts Eqs. (26) and (27).
  \item $S_{i}^{(a)}, 1\leq i \leq K$, take three different values $C_{1}^{(a)}, C_{2}^{(a)}, C_{3}^{(a)}$. Let $N_{j}^{(a)} \ (1\leq j \leq 3)$ denote the frequency that $C_{j}^{(a)}$ occurs, where $N_{1}^{(a)}\geq N_{2}^{(a)}\geq N_{3}^{(a)}\geq 1$. Since $K$ is not a power of $p$ and $p\geq 3$, then $\frac{K}{p}\geq 2$ and $N_{1}^{(a)}\geq 2$. W.l.o.g, suppose $S_{1}^{(a)}=S_{2}^{(a)}=C_{1}^{(a)}, S_{3}^{(a)}=C_{2}^{(a)}, S_{4}^{(a)}=C_{3}^{(a)}$. Let $f$ be a bent function generated by $\Gamma$ for which $D_{f, 0}=A_{1}\bigcup A_{3}$ if $\frac{K}{p}=2$ and $D_{f, 0}=A_{1} \bigcup A_{3} \bigcup_{i=5}^{\frac{K}{p}+2}A_{i}$ if $\frac{K}{p}\geq 3$. Let $f'$ be a bent function generated by $\Gamma$ for which $D_{f', 0}=A_{1}\bigcup A_{4}$ if $\frac{K}{p}=2$ and $D_{f', 0}=A_{1} \bigcup A_{4} \bigcup_{i=5}^{\frac{K}{p}+2}A_{i}$ if $\frac{K}{p}\geq 3$. Let $f''$ be a bent function generated by $\Gamma$ for which $D_{f'', 0}=A_{3}\bigcup A_{4}$ if $\frac{K}{p}=2$ and $D_{f'', 0}=A_{3} \bigcup A_{4} \bigcup_{i=5}^{\frac{K}{p}+2}A_{i}$ if $\frac{K}{p}\geq 3$. Then $\sum_{c \in \mathbb{F}_{p}^{*}}\chi_{ca}(D_{f, 0}), \sum_{c \in \mathbb{F}_{p}^{*}}\chi_{ca}(D_{f', 0}), \sum_{c \in \mathbb{F}_{p}^{*}}\chi_{ca}(D_{f'', 0})$, take three different values. Let $f'''$ be a bent function generated by $\Gamma$ for which $D_{f''', 0}=A_{1}\bigcup A_{2}$ if $\frac{K}{p}=2$ and $D_{f''', 0}=A_{1} \bigcup A_{2} \bigcup_{i=5}^{\frac{K}{p}+2}A_{i}$ if $\frac{K}{p}\geq 3$. As $\sum_{c \in \mathbb{F}_{p}^{*}}\chi_{ca}(D_{f, 0}), \sum_{c \in \mathbb{F}_{p}^{*}}\chi_{ca}(D_{f', 0}), \sum_{c \in \mathbb{F}_{p}^{*}}\chi_{ca}(D_{f''', 0})$, also take three different values, by Eq. (28), we obtain $\sum_{c \in \mathbb{F}_{p}^{*}}\chi_{ca}(D_{f'', 0})=\sum_{c \in \mathbb{F}_{p}^{*}}\chi_{ca}(D_{f''', 0})$, which implies that $C_{1}^{(a)}=\frac{C_{2}^{(a)}+C_{3}^{(a)}}{2}$. If $N_{2}^{(a)}\geq 2$, then with the similar arguments as above, we obtain $C_{2}^{(a)}=\frac{C_{1}^{(a)}+C_{3}^{(a)}}{2}$ and then $C_{1}^{(a)}=C_{2}^{(a)}=C_{3}^{(a)}$, which is impossible. Thus, $N_{2}^{(a)}=N_{3}^{(a)}=1$. Let $f$ be a bent function generated by $\Gamma$ for which $A_{3}\subseteq D_{f, 0}$ and $A_{4} \subseteq D_{f, 1}$, then $\sum_{c \in \mathbb{F}_{p}^{*}}\chi_{ca}(D_{f, 0})=(\frac{K}{p}-1)C_{1}^{(a)}+C_{2}^{(a)}$, $\sum_{c \in \mathbb{F}_{p}^{*}}\chi_{ca}(D_{f, 1})=(\frac{K}{p}-1)C_{1}^{(a)}+C_{3}^{(a)}$ and $\sum_{c \in \mathbb{F}_{p}^{*}}\chi_{ca}(D_{f, j})=\frac{K}{p}C_{1}^{(a)}, j \neq 0, 1$, which contradicts Eqs. (26) and (27).
  \item $S_{i}^{(a)}, 1\leq i \leq K$, take at least four different values. This case contradicts Eq. (28).
\end{itemize}
By the above arguments, $S_{i}^{(a)}, 1\leq i \leq K$, take two different values $C_{1}^{(a)}, C_{2}^{(a)}$. Let $N_{j}^{(a)} \ (j=1, 2)$ denote the frequency that $C_{j}^{(a)}$ occurs, where $N_{1}^{(a)}\geq N_{2}^{(a)}\geq 1$. We further analyze $C_{1}^{(a)}, C_{2}^{(a)}$ by considering the following cases.
\begin{itemize}
  \item $N_{2}^{(a)}=1, N_{1}^{(a)}=K-1$. In this case, by Eqs. (26) and (27), $(\frac{K}{p}-1)C_{1}^{(a)}+C_{2}^{(a)}=\varepsilon p^{\frac{n}{2}-1}(p-1)^{2}, \frac{K}{p}C_{1}^{(a)}=-\varepsilon p^{\frac{n}{2}-1}(p-1)$ (that is, $C_{1}^{(a)}=-\frac{\varepsilon p^{\frac{n}{2}}(p-1)}{K}$, $C_{2}^{(a)}=-\frac{\varepsilon p^{\frac{n}{2}}(p-1)}{K}+\varepsilon p^{\frac{n}{2}}(p-1)$), or $(\frac{K}{p}-1)C_{1}^{(a)}+C_{2}^{(a)}=-\varepsilon p^{\frac{n}{2}-1}(p-1), \frac{K}{p}C_{1}^{(a)}=\varepsilon p^{\frac{n}{2}-1}$. If $\frac{K}{p}C_{1}^{(a)}=\varepsilon p^{\frac{n}{2}-1}$, then $C_{1}^{(a)}=\frac{\varepsilon p^{\frac{n}{2}}}{K}$ is not an integer since $K$ is not a power of $p$, which is impossible.
  \item $N_{2}^{(a)}=2, N_{1}^{(a)}=K-2$. W.l.o.g, suppose $S_{1}^{(a)}=S_{2}^{(a)}=C_{2}^{(a)}$ and $S_{i}^{(a)}=C_{1}^{(a)}, 3\leq i \leq K$. Let $f$ be a bent function generated by $\Gamma$ for which $A_{1} \subseteq D_{f, 0}, A_{2}\subseteq D_{f, 1}$. Then $\sum_{c \in \mathbb{F}_{p}^{*}}\chi_{ca}(D_{f, 0})=\sum_{c \in \mathbb{F}_{p}^{*}}\chi_{ca}(D_{f, 1})=(\frac{K}{p}-1)C_{1}^{(a)}+C_{2}^{(a)}$ and $\sum_{c \in \mathbb{F}_{p}^{*}}\chi_{ca}(D_{f, j})=\frac{K}{p}C_{1}^{(a)}, j \neq 0, 1$. When $p\geq 5$, this case contradicts Eqs. (26) and (27). When $p=3$, let $f'$ be a bent function generated by $\Gamma$ for which $A_{1} \cup A_{2} \subseteq D_{f', 0}$, then by Eqs. (26) and (27), $(\frac{K}{3}-2)C_{1}^{(a)}+2C_{2}^{(a)}=4\varepsilon 3^{\frac{n}{2}-1}, \frac{K}{3}C_{1}^{(a)}=-2\varepsilon 3^{\frac{n}{2}-1}$ (that is, $C_{1}^{(a)}=\frac{-2\varepsilon 3^{\frac{n}{2}}}{K}, C_{2}^{(a)}=\frac{-2\varepsilon 3^{\frac{n}{2}}}{K}+\varepsilon 3^{\frac{n}{2}}$), or $(\frac{K}{3}-2)C_{1}^{(a)}+2C_{2}^{(a)}=-2\varepsilon 3^{\frac{n}{2}-1},  \frac{K}{3}C_{1}^{(a)}=\varepsilon 3^{\frac{n}{2}-1}$. If $\frac{K}{3}C_{1}^{(a)}=\varepsilon 3^{\frac{n}{2}-1}$, then $C_{1}^{(a)}=\frac{\varepsilon 3^{\frac{n}{2}}}{K}$ is not an integer since $K$ is not a power of $3$, which is impossible.
  \item $N_{2}^{(a)}\geq 3, N_{1}^{(a)}=K-N_{2}^{(a)}\geq N_{2}^{(a)}$. W.l.o.g, suppose $S_{1}^{(a)}=S_{2}^{(a)}=S_{3}^{(a)}=C_{2}^{(a)}$, $S_{4}^{(a)}=S_{5}^{(a)}=S_{6}^{(a)}=C_{1}^{(a)}$. If $\frac{K}{p}=2$, then let $f$ be a bent function generated by $\Gamma$ for which $A_{1}\bigcup A_{2}=D_{f, 0}, A_{3}\bigcup A_{4}=D_{f, 1}, A_{5}\bigcup A_{6}=D_{f, 2}$. In this case, $\sum_{c \in \mathbb{F}_{p}^{*}}\chi_{ca}(D_{f, j}), j=0, 1, 2$, are different, which contradicts Eqs. (26) and (27). If $\frac{K}{p}\geq 3$, then let $f$ be a bent function generated by $\Gamma$ for which $D_{f, 0}=A_{1}\bigcup A_{2}\bigcup A_{3}$ if $\frac{K}{p}=3$ and $D_{f, 0}=A_{1} \bigcup A_{2} \bigcup A_{3} \bigcup_{i=7}^{\frac{K}{p}+3}A_{i}$ if $\frac{K}{p}\geq 4$; let $f'$ be a bent function generated by $\Gamma$ for which $D_{f', 0}=A_{1}\bigcup A_{2}\bigcup A_{4}$ if $\frac{K}{p}=3$ and $D_{f', 0}=A_{1} \bigcup A_{2} \bigcup A_{4} \bigcup_{i=7}^{\frac{K}{p}+3}A_{i}$ if $\frac{K}{p}\geq 4$; let $f''$ be a bent function generated by $\Gamma$ for which $D_{f'', 0}=A_{1}\bigcup A_{4}\bigcup A_{5}$ if $\frac{K}{p}=3$ and $D_{f'', 0}=A_{1} \bigcup A_{4} \bigcup A_{5} \bigcup_{i=7}^{\frac{K}{p}+3}A_{i}$ if $\frac{K}{p}\geq 4$; and let $f'''$ be a bent function generated by $\Gamma$ for which $D_{f''', 0}=A_{4}\bigcup A_{5}\bigcup A_{6}$ if $\frac{K}{p}=3$ and $D_{f''', 0}=A_{4} \bigcup A_{5} \bigcup A_{6}\bigcup_{i=7}^{\frac{K}{p}+3}A_{i}$ if $\frac{K}{p}\geq 4$. Then $\sum_{c \in \mathbb{F}_{p}^{*}}\chi_{ca}(D_{f, 0}), \sum_{c \in \mathbb{F}_{p}^{*}}\chi_{ca}(D_{f', 0}), \sum_{c \in \mathbb{F}_{p}^{*}}\chi_{ca}(D_{f'', 0}), \sum_{c \in \mathbb{F}_{p}^{*}}\chi_{ca}(D_{f''', 0})$, are different, which contradicts Eq. (28).
\end{itemize}
By the above arguments, when $p\geq 5$, only Eq. (26) occurs and $h_{f}(a)=0$ for all $a \in {V_{n}^{(p)}}^{*}$, where $f$ is any bent function generated by $\Gamma$. Since $h_{f}(x)=\frac{f^{*}(x)-f^{*}(-x)}{2}$, then $f^{*}(x)=f^{*}(-x), x \in V_{n}^{(p)}$, which implies that $f(x)=f(-x), x \in V_{n}^{(p)}$. Recall that $A_{i_{0}}\neq -A_{i_{0}}$. Then there is nonzero $a_{0} \in A_{i_{0}}$ with $-a_{0} \notin A_{i_{0}}$. Suppose $-a_{0} \in A_{i_{1}}$. Let $f$ be a bent function generated by $\Gamma$ for which $A_{i_{0}}\subseteq D_{f, 0}$ and $A_{i_{1}} \subseteq D_{f, 1}$. Then $f(a_{0})=0, f(-a_{0})=1$, which is a contradiction. Therefore, $K$ must be a power of $p$ when $p\geq 5$.

When $p=3$, for any fixed $a \in {V_{n}^{(3)}}^{*}$, we have obtained $(N_{1}^{(a)}, N_{2}^{(a)}, C_{1}^{(a)}, C_{2}^{(a)})=(K-1, 1, -\frac{2\varepsilon 3^{\frac{n}{2}}}{K}, -\frac{2\varepsilon 3^{\frac{n}{2}}}{K}+2 \varepsilon 3^{\frac{n}{2}})$ or $(N_{1}^{(a)}, N_{2}^{(a)}, C_{1}^{(a)}, C_{2}^{(a)})=(K-2, 2, -\frac{2\varepsilon 3^{\frac{n}{2}}}{K}, -\frac{2\varepsilon 3^{\frac{n}{2}}}{K}+\varepsilon 3^{\frac{n}{2}})$. Since $K$ is not a power of $3$ and $\frac{2\varepsilon 3^{\frac{n}{2}}}{K}$ is an integer, then $K=2 \cdot 3^m$ for some positive integer $m\leq \frac{n}{2}$. By Proposition 1, there exists a bent partition $\Gamma'$ of depth $6$ belonging to class $\mathcal{WBP}$. For the consistency of notation, we also denote $\Gamma'=\{A_{i}, 1\leq i \leq 6\}$ (that is equivalent to considering the case of $K=6$). If $(N_{1}^{(a)}, N_{2}^{(a)}, C_{1}^{(a)}, C_{2}^{(a)})=(5, 1, -\varepsilon 3^{\frac{n}{2}-1}, 5\varepsilon  3^{\frac{n}{2}-1})$ for all $a \in {V_{n}^{(3)}}^{*}$, then only Eq. (26) holds, which is impossible with the similar proof as in the case of $p\geq 5$. Thus, there exists $a_{0} \in {V_{n}^{(3)}}^{*}$ such that $(N_{1}^{(a_{0})}, N_{2}^{(a_{0})}, C_{1}^{(a_{0})}, C_{2}^{(a_{0})})=(4, 2, -\varepsilon 3^{\frac{n}{2}-1}, 2\varepsilon 3^{\frac{n}{2}-1})$. W.l.o.g., suppose $S_{1}^{(a_{0})}=S_{2}^{(a_{0})}=2\varepsilon 3^{\frac{n}{2}-1}$. Note that $S_{i}^{(a_{0})}=\chi_{a_{0}}(A_{i})+\overline{\chi_{a_{0}}(A_{i})}, 1\leq i \leq 6$ as $p=3$. Then $\chi_{a_{0}}(A_{i})=\varepsilon 3^{\frac{n}{2}-1}+\beta_{i}\sqrt{-1}$ for $i=1, 2$, and $\chi_{a_{0}}(A_{i})=-\frac{\varepsilon 3^{\frac{n}{2}-1}}{2}+\beta_{i}\sqrt{-1}$ for $3\leq i\leq 6$, where $\beta_{i}, 1\leq i \leq 6$, are real numbers. For a complex number $a+b \sqrt{-1}$, where $a, b$ are real numbers, denote $Re(a+b \sqrt{-1})=a, Im(a+b \sqrt{-1})=b$. For any $3\leq i \leq 5$, let $f^{(i)}$ be the bent function generated by $\Gamma'$ for which $D_{f^{(i)}, 0}=A_{1}\cup A_{i}$, $D_{f^{(i)}, 1}=A_{2} \cup A_{6}$. By Eq. (22) and $Re(\chi_{a_{0}}(D_{f^{(i)}, 0}))=Re(\chi_{a_{0}}(D_{f^{(i)}, 1}))$, we have $\chi_{a_{0}}(D_{f^{(i)}, 0})=\chi_{a_{0}}(D_{f^{(i)}, 1})$, and then $\beta_{1}+\beta_{i}=\beta_{2}+\beta_{6}$ for any $3\leq i \leq 5$, which implies that $\beta_{3}=\beta_{4}=\beta_{5}$. Let $f$ be the bent function generated by $\Gamma'$ for which $D_{f, 0}=A_{1}\cup A_{6}$, $D_{f, 1}=A_{2} \cup A_{3}$. By Eq. (22) and $Re(\chi_{a_{0}}(D_{f,0}))=Re(\chi_{a_{0}}(D_{f, 1}))$, we have $\chi_{a_{0}}(D_{f, 0})=\chi_{a_{0}}(D_{f, 1})$, and then $\beta_{1}+\beta_{6}=\beta_{2}+\beta_{3}$. Combining $\beta_{1}+\beta_{6}=\beta_{2}+\beta_{3}$, $\beta_{1}+\beta_{3}=\beta_{2}+\beta_{6}$ and
$\beta_{3}=\beta_{4}=\beta_{5}$, we obtain $\beta_{1}=\beta_{2}, \beta_{3}=\beta_{4}=\beta_{5}=\beta_{6}$. Let $f'$ be a bent function generated by $\Gamma'$ with $D_{f', 0}=A_{1}\cup A_{2}$. Then by Eq. (22) and $Re(\chi_{a_{0}}(D_{f', 0}))=2\varepsilon 3^{\frac{n}{2}-1}$, we have $h_{f'}(-a_{0})=0$, and then $\chi_{a_{0}}(D_{f', 1})=-\varepsilon 3^{\frac{n}{2}-1}+2\beta_{3} \sqrt{-1}$ is an integer, which implies that $\beta_{3}=0$. Then $\chi_{a_{0}}(A_{3})=-\frac{\varepsilon 3^{\frac{n}{2}-1}}{2} \in \mathbb{Z}[\zeta_{3}]\cap \mathbb{Q}=\mathbb{Z}$, which is impossible. Hence, there is no bent partition of depth $6$ belonging to class $\mathcal{WBP}$, and the depth of any bent partition of $V_{n}^{(3)}$ belonging to class $\mathcal{WBP}$ is a power of $3$. This completes the proof.
\end{proof}

\begin{remark}\label{Remark 1}
By Theorems 1 and 2, bent partitions $\Gamma$ of $V_{n}^{(p)}$ of depth $K$ belonging to class $\mathcal{WBP}$ one-to-one correspond to vectorial bent functions $F: V_{n}^{(p)}\rightarrow V_{m}^{(p)}$ ($m=log_{p}K$) for which one of the following conditions holds:
\begin{itemize}
  \item $P(F(x))$ is a vectorial bent function whose component functions are all weakly regular with $\varepsilon_{(P(F))_{c}}=\varepsilon$ (where $\varepsilon \in \{\pm 1\}$ is a constant independent of $P$ and $c$) for any permutation $P$ over $V_{m}^{(p)}$.
  \item $B(F(x))$ is a weakly regular $p$-ary bent function with $\varepsilon_{B(F(x))}=\varepsilon$ (where $\varepsilon \in \{\pm 1\}$ is a constant independent of $B$) for any balanced function $B: V_{m}^{(p)}\rightarrow \mathbb{F}_{p}$.
\end{itemize}
In particular, bent partitions $\Gamma$ of $V_{n}^{(2)}$ of depth $K$ one-to-one correspond to Boolean vectorial bent functions $F: V_{n}^{(2)}\rightarrow V_{m}^{(2)}$ ($m=log_{2}K$) for which $P(F(x))$ is a Boolean vectorial bent function for any permutation $P$ over $V_{m}^{(2)}$, or $B(F(x))$ is a Boolean bent function for any balanced function $B: V_{m}^{(2)}\rightarrow \mathbb{F}_{2}$.
\end{remark}

\section{New constructions of bent partitions}
In this section, we will provide new constructions of bent partitions. In particular, for any characteristic $p$, bent partitions that do not correspond to vectorial dual-bent functions will be obtained.

\begin{theorem} \label{Theorem 4}
Let $n, m$ be positive integers with $2 \mid n, m\leq \frac{n}{2}$, and $m\geq 2$ if $p=2$. Let $F: V_{n}^{(p)}\rightarrow V_{m}^{(p)}$ be a vectorial bent function for which any component function $F_{c}$ ($c \in {V_{m}^{(p)}}^{*}$) is weakly regular with $\varepsilon_{F_{c}}=\varepsilon$, where $\varepsilon \in \{\pm 1\}$ is a constant independent of $c$, and there are functions $G: V_{n}^{(p)}\rightarrow V_{m}^{(p)}$ and $h: V_{n}^{(p)}\rightarrow \mathbb{F}_{p}$ such that
\begin{equation} \label{29}
(F_{c})^{*}(x)=G_{c}(x)+h(x), c \in {V_{m}^{(p)}}^{*}, x \in V_{n}^{(p)}.
\end{equation}
Then $\{D_{F, i}, i \in V_{m}^{(p)}\}$ is a bent partition belonging to class $\mathcal{WBP}$.
\end{theorem}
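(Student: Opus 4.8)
The plan is to verify condition $\mathrm{(iii)}$ of Theorem \ref{Theorem 1}: it suffices to prove that for every balanced function $B: V_{m}^{(p)}\rightarrow \mathbb{F}_{p}$, the composition $B(F(x))$ is a weakly regular $p$-ary bent function with $\varepsilon_{B(F)}=\varepsilon$ independent of $B$. Once this is established, Theorem \ref{Theorem 1} immediately yields that $\{D_{F, i}, i \in V_{m}^{(p)}\}$ is a bent partition belonging to class $\mathcal{WBP}$. This mirrors the $\Leftarrow$ direction of Theorem \ref{Theorem 2}, now carried out for the vectorial function $F$ rather than for a single $p$-ary $f$.

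First I would express the Walsh transform of $B(F)$ through the character sums of the preimage sets. Since $B(F(x))=j$ exactly when $F(x)\in D_{B, j}$, one has
\[
W_{B(F)}(a)=\sum_{j \in \mathbb{F}_{p}}\zeta_{p}^{j}\sum_{i \in D_{B, j}}\chi_{-a}(D_{F, i}).
\]
The key computation is a closed form for $\chi_{a}(D_{F, i})$. Using the indicator identity $\delta_{F(x)=i}=p^{-m}\sum_{c \in V_{m}^{(p)}}\zeta_{p}^{\langle c, F(x)-i\rangle_{m}}$ and the definition of the Walsh transform, one gets $\chi_{a}(D_{F, i})=p^{-m}\sum_{c \in V_{m}^{(p)}}\zeta_{p}^{-\langle c, i\rangle_{m}}W_{F_{c}}(-a)$. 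Separating the term $c=0$ (which contributes $p^{n}\delta_{0}(a)$) from $c \in {V_{m}^{(p)}}^{*}$, and substituting $W_{F_{c}}(-a)=\varepsilon p^{\frac{n}{2}}\zeta_{p}^{(F_{c})^{*}(-a)}$ together with the hypothesis $(F_{c})^{*}(-a)=G_{c}(-a)+h(-a)=\langle c, G(-a)\rangle_{m}+h(-a)$, the remaining sum becomes a character sum in $c$.

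The decisive point is that $h(-a)$ is independent of $c$ and factors out, while $G_{c}(-a)=\langle c, G(-a)\rangle_{m}$ is linear in $c$; hence by the orthogonality relation $\sum_{c \in {V_{m}^{(p)}}^{*}}\zeta_{p}^{\langle c, G(-a)-i\rangle_{m}}=p^{m}\delta_{G(-a)}(i)-1$, which gives
\[
\chi_{a}(D_{F, i})=p^{n-m}\delta_{0}(a)+\varepsilon p^{\frac{n}{2}}\zeta_{p}^{h(-a)}\delta_{G(-a)}(i)-\varepsilon p^{\frac{n}{2}-m}\zeta_{p}^{h(-a)}.
\]
Replacing $a$ by $-a$ and inserting the resulting expression for $\chi_{-a}(D_{F, i})$ back, I then sum over $i \in D_{B, j}$, using that $|D_{B, j}|=p^{m-1}$ (as $B$ is balanced) and $\sum_{i \in D_{B, j}}\delta_{G(a)}(i)=\delta_{j}(B(G(a)))$, and finally sum over $j$ with $\sum_{j \in \mathbb{F}_{p}}\zeta_{p}^{j}=0$ to annihilate the $j$-independent contributions. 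This is expected to produce
\[
W_{B(F)}(a)=\varepsilon p^{\frac{n}{2}}\zeta_{p}^{B(G(a))+h(a)},
\]
showing that $B(F)$ is weakly regular bent with dual $a \mapsto B(G(a))+h(a)$ and constant $\varepsilon_{B(F)}=\varepsilon$ independent of $B$, which is precisely Theorem \ref{Theorem 1}$\mathrm{(iii)}$.

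The main obstacle is the character-sum manipulation of the third step: one must track the distinct roles of $G$ and $h$ carefully so that the sum over the nonzero components $c$ collapses by orthogonality, with $G_{c}=\langle c, G\rangle_{m}$ supplying the linear-in-$c$ phase and $h$ pulling out as a common factor. Everything else is routine bookkeeping of signs and the balancedness cancellation $\sum_{j}\zeta_{p}^{j}=0$; the hypothesis $m\geq 2$ when $p=2$ only serves to exclude the degenerate depth-$2$ case and plays no role in the core argument.
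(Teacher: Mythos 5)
Your proposal is correct and follows essentially the same route as the paper: the paper likewise derives the closed form $\chi_{a}(D_{F,i})=p^{n-m}\delta_{0}(a)+\varepsilon p^{\frac{n}{2}-m}\zeta_{p}^{h(-a)}(p^{m}\delta_{G(-a)}(i)-1)$ (citing a known inversion formula rather than rederiving it from the indicator identity) and then sums over the $p^{m-1}$ preimage sets in each fibre of a generated function $f$ to obtain $W_{f}(-a)=\varepsilon p^{\frac{n}{2}}\zeta_{p}^{h(-a)+g(a)}$, which matches your $W_{B(F)}(a)=\varepsilon p^{\frac{n}{2}}\zeta_{p}^{B(G(a))+h(a)}$. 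The only cosmetic difference is that you phrase the conclusion through condition (iii) of Theorem 1 while the paper verifies the definition of a bent partition directly; these are the same check.
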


\begin{proof}
For any $a \in V_{n}^{(p)}$ and $i \in V_{m}^{(p)}$, by \cite[Proposition 3]{WF2023Ne}, we have
\begin{equation} \label{30}
\begin{split}
\chi_{a}(D_{F, i})&=p^{n-m}\delta_{0}(a)+p^{-m}\sum_{c \in {V_{m}^{(p)}}^{*}}W_{F_{c}}(-a)\zeta_{p}^{-\langle c, i\rangle_{m}}\\
&=p^{n-m}\delta_{0}(a)+\varepsilon p^{\frac{n}{2}-m}\sum_{c \in {V_{m}^{(p)}}^{*}}\zeta_{p}^{(F_{c})^{*}(-a)-\langle c, i\rangle_{m}}\\
&=p^{n-m}\delta_{0}(a)+\varepsilon p^{\frac{n}{2}-m}\zeta_{p}^{h(-a)}\sum_{c \in {V_{m}^{(p)}}^{*}}\zeta_{p}^{\langle c, G(-a)-i\rangle_{m}}\\
&=p^{n-m}\delta_{0}(a)+\varepsilon p^{\frac{n}{2}-m}\zeta_{p}^{h(-a)}(p^m\delta_{G(-a)}(i)-1).
\end{split}
\end{equation}
For any union $D$ of $p^{m-1}$ sets of $\{D_{F, i}, i \in V_{m}^{(p)}\}$, we have
\begin{equation*}
\chi_{a}(D)=\left\{
\begin{split}
p^{n-1}\delta_{0}(a)+\varepsilon p^{\frac{n}{2}-1}\zeta_{p}^{h(-a)}(p-1), & \text{ if } D_{F, G(-a)} \subseteq D,\\
p^{n-1}\delta_{0}(a)-\varepsilon p^{\frac{n}{2}-1}\zeta_{p}^{h(-a)}, & \text{ if } D_{F, G(-a)} \nsubseteq D.
\end{split}
\right.
\end{equation*}
Let $f: V_{n}^{(p)} \rightarrow \mathbb{F}_{p}$ be a function for which for each $j \in \mathbb{F}_{p}$, there are exactly $p^{m-1}$ sets $D_{F, i}$ in its preimage set. For any $a \in V_{n}^{(p)}$, we have
\begin{equation*}
\chi_{a}(D_{f, j})=\left\{
\begin{split}
p^{n-1}\delta_{0}(a)+\varepsilon p^{\frac{n}{2}-1}\zeta_{p}^{h(-a)}(p-1), & \text{ if } j=g(a),\\
p^{n-1}\delta_{0}(a)-\varepsilon p^{\frac{n}{2}-1}\zeta_{p}^{h(-a)}, & \text{ if } j \neq g(a),
\end{split}
\right.
\end{equation*}
where $g(a)=f(D_{F, G(-a)})$. Then we can obtain
\begin{equation*}
W_{f}(-a)=\sum_{j \in \mathbb{F}_{p}}\zeta_{p}^{j}\chi_{a}(D_{f, j})=\varepsilon p^{\frac{n}{2}}\zeta_{p}^{h(-a)+g(a)}.
\end{equation*}
Thus $f$ is weakly regular bent with $\varepsilon_{f}=\varepsilon$ and $\{D_{F, i}, i \in V_{m}^{(p)}\}$ is a bent partition belonging to class $\mathcal{WBP}$.
\end{proof}

\begin{remark} \label{Remark 2}
When $p=2$, by \cite[Proposition 2]{WFWY2024A}, Eq. (29) is a necessary and sufficient condition for the preimage set partition of a vectorial bent function to be a bent partition. When $p$ is odd and $m=1$, by Theorem 2, Eq. (29) is a necessary and sufficient condition for the preimage set partition of a weakly regular bent function to be a bent partition belonging to class $\mathcal{WBP}$. For other cases, whether Eq. (29) is a necessary and sufficient condition for the preimage set partition of a vectorial bent function whose component functions are all regular or all weakly regular but not regular to be a bent partition belonging to class $\mathcal{WBP}$ is open.
\end{remark}

\begin{remark} \label{Remark 3}
Note that the vectorial bent function $F$ in Theorem 4 is vectorial dual-bent if and only if $h$ is the zero function.
\end{remark}

We characterize the conditions in Theorem 4 in terms of generalized Hadamard matrices.

\begin{proposition} \label{Proposition 2}
Let $F: V_{n}^{(p)}\rightarrow V_{m}^{(p)}$, where $n$ is even, $m\leq \frac{n}{2}$, and $m\geq 2$ if $p=2$. Define
\begin{equation*}
H_{c}=\left[\zeta_{p}^{F_{c}(x-y)}\right]_{x, y \in V_{n}^{(p)}}, c \in {V_{m}^{(p)}}^{*}.
\end{equation*}
The following two statements are equivalent.

$\mathrm{(i)}$ $F$ is a vectorial bent function for which all component functions $F_{c}, c \in {V_{m}^{(p)}}^{*}$, are weakly regular with $\varepsilon_{F_{c}}=\varepsilon$ and there are functions $G: V_{n}^{(p)} \rightarrow V_{m}^{(p)}$ and $h: V_{n}^{(p)} \rightarrow \mathbb{F}_{p}$ such that $(F_{c})^{*}(x)=G_{c}(x)+h(x), c \in {V_{m}^{(p)}}^{*}, x \in V_{n}^{(p)}$, where $\varepsilon \in \{\pm 1\}$ is a constant independent of $c$.

$\mathrm{(ii)}$ $H_{c}, c \in {V_{m}^{(p)}}^{*}$, are generalized Hadamard matrices of weakly regular type with $\nu_{H_{c}}=\varepsilon$ and $H_{d}\overline{H_{e}}^{\top} \overline{H_{d-e}}^{\top}, d \neq e \in {V_{m}^{(p)}}^{*}$, are all the same, where $\varepsilon \in \{\pm 1\}$ is a constant independent of $c$.
\end{proposition}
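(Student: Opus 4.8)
The plan is to convert each clause of statement (i) into a spectral statement about the matrices $H_{c}$ via their simultaneous diagonalization by the character matrix. First I would record the dictionary. Since $H_{c}=[\zeta_{p}^{F_{c}(x-y)}]$ is a convolution (group-invariant) matrix over $V_{n}^{(p)}$, it is diagonalized by $P=[\zeta_{p}^{\langle a, x\rangle_{n}}]_{x, a}$, which is symmetric and satisfies $P\overline{P}=p^{n}I$. A one-line computation gives $H_{c}P=PD_{c}$ with $D_{c}=\mathrm{diag}(W_{F_{c}}(a))_{a \in V_{n}^{(p)}}$, hence
\begin{equation*}
H_{c}=p^{-n}PD_{c}\overline{P}, \qquad \overline{H_{c}}^{\top}=p^{-n}P\overline{D_{c}}\,\overline{P}.
\end{equation*}
By the generalized Hadamard matrix criterion recalled in Section II, $H_{c}$ is a generalized Hadamard matrix iff $F_{c}$ is bent. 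Moreover, substituting $u=x-y$ shows the row sum of $H_{c, z}$ equals $s_{z}=W_{F_{c}}(z)$, so $p^{-\frac{n}{2}}s_{z}=p^{-\frac{n}{2}}W_{F_{c}}(z)$; since $n$ is even, this lies in $U_{p}^{(\varepsilon)}$ for all $z$ exactly when $F_{c}$ is weakly regular with $\varepsilon_{F_{c}}=\varepsilon$. Thus the first clauses of (i) and (ii) are equivalent.

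Next I would compute the triple product under the first clause. Using the two displayed identities together with the collapse $\overline{P}P=p^{n}I$,
\begin{equation*}
H_{d}\overline{H_{e}}^{\top}\overline{H_{d-e}}^{\top}=p^{-n}P\bigl(D_{d}\overline{D_{e}}\,\overline{D_{d-e}}\bigr)\overline{P},
\end{equation*}
whose $a$-th diagonal factor is $W_{F_{d}}(a)\overline{W_{F_{e}}(a)}\,\overline{W_{F_{d-e}}(a)}$. Feeding in weak regularity with common sign $\varepsilon$ and $n$ even, and using $\varepsilon^{3}=\varepsilon$, this factor equals $\varepsilon p^{\frac{3n}{2}}\zeta_{p}^{(F_{d})^{*}(a)-(F_{e})^{*}(a)-(F_{d-e})^{*}(a)}$. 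Because $P$ and $\overline{P}$ are fixed and invertible, the matrices $H_{d}\overline{H_{e}}^{\top}\overline{H_{d-e}}^{\top}$ agree for all $d\neq e$ if and only if, for every fixed $a$, the exponent $(F_{d})^{*}(a)-(F_{e})^{*}(a)-(F_{d-e})^{*}(a) \bmod p$ is independent of the pair $(d,e)$.

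It remains to match this invariance with Eq. (29). For the implication (i)$\Rightarrow$(ii) I substitute $(F_{c})^{*}(a)=\langle c, G(a)\rangle_{m}+h(a)$; since $c\mapsto\langle c, G(a)\rangle_{m}$ is linear and $F_{d-e}=F_{d}-F_{e}$, the exponent telescopes to $-h(a)$, manifestly independent of $(d,e)$. For the converse (ii)$\Rightarrow$(i) I fix $a$, denote by $-h(a)$ the common value of the exponent, and set $\psi_{a}(c)=(F_{c})^{*}(a)-h(a)$ for $c\in {V_{m}^{(p)}}^{*}$. The invariance then reads $\psi_{a}(d-e)=\psi_{a}(d)-\psi_{a}(e)$, i.e.\ $\psi_{a}(u+v)=\psi_{a}(u)+\psi_{a}(v)$ whenever $u, v, u+v\in {V_{m}^{(p)}}^{*}$. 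I would extend $\psi_{a}$ to an $\mathbb{F}_{p}$-linear functional on $V_{m}^{(p)}$ (with $\psi_{a}(0)=0$); non-degeneracy of $\langle\cdot,\cdot\rangle_{m}$ then supplies $G(a)\in V_{m}^{(p)}$ with $\psi_{a}(c)=\langle c, G(a)\rangle_{m}$, whence $(F_{c})^{*}(a)=G_{c}(a)+h(a)$, which is Eq. (29).

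The main obstacle is this final extension step, because the additive relation is only guaranteed on nonzero arguments whose sum is also nonzero. I would first establish homogeneity $\psi_{a}(\lambda c)=\lambda\psi_{a}(c)$ for $\lambda\in\mathbb{F}_{p}^{*}$ by iterating the relation along $c, 2c, \dots, (p-1)c$ (all nonzero), and then recover full additivity by handling the degenerate cases $u=0$ and $v=-u$ through $\psi_{a}(0)=0$ and $\psi_{a}(-c)=-\psi_{a}(c)$. This is precisely where the hypothesis $m\geq 2$ for $p=2$ is used: it ensures ${V_{m}^{(p)}}^{*}$ carries a nontrivial additive triple, whereas for $p=2$, $m=1$ the set ${V_{1}^{(2)}}^{*}$ is a single point, the invariance condition is vacuous, and $\psi_{a}$ cannot be pinned down. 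For odd $p$ the set ${V_{m}^{(p)}}^{*}$ already has at least $p-1\geq 2$ elements and admits such a triple, so the argument goes through in the remaining cases.
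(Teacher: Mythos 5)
Your proof is correct, and it reaches the same pivotal intermediate fact as the paper — namely that, given bentness and common weak regularity, the triple products $H_{d}\overline{H_{e}}^{\top}\overline{H_{d-e}}^{\top}$ all coincide if and only if $(F_{d})^{*}(x)-(F_{e})^{*}(x)-(F_{d-e})^{*}(x)$ is independent of the pair $(d,e)$ — but by a noticeably different route on both sides of that fact. For the computation, the paper expands the $(i,j)$ entry of the triple product directly, substituting the inverse Walsh transform for each factor $\zeta_{p}^{F_{c}(\cdot)}$ and collapsing the sums by character orthogonality, and then needs Fourier inversion to pass from equality of the entries to equality of the exponents; your simultaneous diagonalization $H_{c}=p^{-n}PD_{c}\overline{P}$ packages the same cancellation once and for all and makes the reduction to $W_{F_{d}}(a)\overline{W_{F_{e}}(a)}\,\overline{W_{F_{d-e}}(a)}$ immediate. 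For the converse direction, the paper defines $h(x)=-(F_{1})^{*}(x)+(F_{\alpha})^{*}(x)+(F_{1-\alpha})^{*}(x)$ for some $\alpha\in V_{m}^{(p)}\setminus\mathbb{F}_{p}$, constructs $G$ explicitly by solving a linear system on a basis of $V_{m}^{(p)}$, and verifies Eq.~(29) by a two-stage induction (first over scalar multiples $ac$, then over the number of basis vectors in the support of $c$); you instead observe that $\psi_{a}(c)=(F_{c})^{*}(a)-h(a)$ satisfies the additive relation on nonzero triples, extend it to a genuine additive (hence $\mathbb{F}_{p}$-linear) functional via the homogeneity argument along $c,2c,\dots,(p-1)c$, and then invoke non-degeneracy of $\langle\cdot,\cdot\rangle_{m}$ to produce $G(a)$. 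Your handling of the degenerate cases ($u+v=0$, and the role of $m\geq 2$ when $p=2$) is correct, and your construction has the incidental merit of treating the case $p$ odd, $m=1$ uniformly, whereas the paper's choice of $\alpha\in V_{m}^{(p)}\setminus\mathbb{F}_{p}$ tacitly presumes $m\geq 2$. The trade-off is that the paper's induction yields $G$ by an explicit formula on a basis, while your argument is shorter and more conceptual but leaves $G(a)$ defined only implicitly through duality.
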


\begin{proof}
$\mathrm{(i)} \Rightarrow \mathrm{(ii)}$: Since $F_{c}$ is weakly regular bent with $\varepsilon_{F_{c}}=\varepsilon$, then by
\cite[Property 4]{KSW1985Ge}, $H_{c}$ is a generalized Hadamard matrix of weakly regular type with $\nu_{H_{c}}=\varepsilon$. For a matrix $M=[a_{i, j}]$, denote $a_{i, j}$ by $(M)_{i, j}$. Since $F$ is vectorial bent with $\varepsilon_{F_{c}}=\varepsilon, c \in {V_{m}^{(p)}}^{*}$, and $(F_{c})^{*}(x)=G_{c}(x)+h(x)$, then for any $d \neq e \in {V_{m}^{(p)}}^{*}$ and $i, j \in V_{n}^{(p)}$, we have
\begin{equation*} \label{31}
\begin{split}
&(H_{d}\overline{H_{e}}^{\top} \overline{H_{d-e}}^{\top})_{i, j}\\
&=\sum_{u \in V_{n}^{(p)}}(H_{d}\overline{H_{e}}^{\top})_{i, u}(\overline{H_{d-e}}^{\top})_{u, j}\\
&=\sum_{u \in V_{n}^{(p)}}\sum_{v \in V_{n}^{(p)}}(H_{d})_{i, v}(\overline{H_{e}}^{\top})_{v, u}(\overline{H_{d-e}}^{\top})_{u, j}\\
&=\sum_{u \in V_{n}^{(p)}}\sum_{v \in V_{n}^{(p)}}(H_{d})_{i, v}(\overline{H_{e}})_{u, v}(\overline{H_{d-e}})_{j, u}\\
&=\sum_{u \in V_{n}^{(p)}}\sum_{v \in V_{n}^{(p)}}\zeta_{p}^{F_{d}(i-v)-F_{e}(u-v)-F_{d-e}(j-u)}\\
&=\varepsilon p^{-\frac{3n}{2}}\sum_{u \in V_{n}^{(p)}}\sum_{v \in V_{n}^{(p)}}\sum_{x \in V_{n}^{(p)}}\zeta_{p}^{(F_{d})^{*}(x)+\langle i-v, x\rangle_{n}}\sum_{y \in V_{n}^{(p)}}\zeta_{p}^{-(F_{e})^{*}(y)-\langle u-v, y\rangle_{n}}\sum_{z \in V_{n}^{(p)}}\zeta_{p}^{-(F_{d-e})^{*}(z)-\langle j-u, z\rangle_{n}}\\
&=\varepsilon p^{-\frac{3n}{2}}\sum_{x, y, z \in V_{n}^{(p)}}\zeta_{p}^{(F_{d})^{*}(x)-(F_{e})^{*}(y)-(F_{d-e})^{*}(z)+\langle i, x\rangle_{n}-\langle j, z\rangle_{n}}\sum_{u \in V_{n}^{(p)}}\zeta_{p}^{\langle u, z-y\rangle_{n}}\sum_{v \in V_{n}^{(p)}}\zeta_{p}^{\langle v, y-x\rangle_{n}}\\
\end{split}
\end{equation*}
\begin{equation}
\begin{split}
&=\varepsilon p^{\frac{n}{2}}\sum_{x \in V_{n}^{(p)}}\zeta_{p}^{(F_{d})^{*}(x)-(F_{e})^{*}(x)-(F_{d-e})^{*}(x)+\langle i-j, x\rangle_{n}}\\
&=\varepsilon p^{\frac{n}{2}}\sum_{x \in V_{n}^{(p)}}\zeta_{p}^{-h(x)+\langle i-j, x\rangle_{n}},
\end{split}
\end{equation}
where in the fifth equation we use the fact that if $f$ is weakly regular bent, then $f^{*}$ is weakly regular bent with $\varepsilon_{f^{*}}=\varepsilon_{f}^{-1}$ and $(f^{*})^{*}(x)=f(-x)$. By Eq. (31), statement $\mathrm{(ii)}$ holds.

$\mathrm{(ii)} \Rightarrow \mathrm{(i)}$: Since $H_{c}, c \in {V_{m}^{(p)}}^{*}$, are generalized Hadamard matrices of weakly regular type with $\nu_{H_{c}}=\varepsilon$, then by \cite[Property 4]{KSW1985Ge}, $F$ is vectorial bent for which all component functions $F_{c}, c \in {V_{m}^{(p)}}^{*}$, are weakly regular with $\varepsilon_{F_{c}}=\varepsilon$. For any $i, j \in V_{n}^{(p)}$ and $d \neq e \in {V_{m}^{(p)}}^{*}$, with the same computation as in Eq. (31),
\begin{equation*}
(H_{d}\overline{H_{e}}^{\top} \overline{H_{d-e}}^{\top})_{i, j}=\varepsilon p^{\frac{n}{2}}\sum_{x \in V_{n}^{(p)}}\zeta_{p}^{(F_{d})^{*}(x)-(F_{e})^{*}(x)-(F_{d-e})^{*}(x)+\langle i-j, x\rangle_{n}}.
\end{equation*}
Let $\alpha \in V_{m}^{(p)} \backslash \mathbb{F}_{p}$, and denote $h(x)=-(F_{1})^{*}(x)+(F_{\alpha})^{*}(x)+(F_{1-\alpha})^{*}(x)$. Since $H_{d}\overline{H_{e}}^{\top} \overline{H_{d-e}}^{\top}, d\neq e \in {V_{m}^{(p)}}^{*}$, are all the same, then for any $d \neq e \in {V_{m}^{(p)}}^{*}$,
\begin{equation*}
\sum_{x \in V_{n}^{(p)}}\zeta_{p}^{(F_{d})^{*}(x)-(F_{e})^{*}(x)-(F_{d-e})^{*}(x)+\langle i, x\rangle_{n}}=\sum_{x \in V_{n}^{(p)}}\zeta_{p}^{-h(x)+\langle i, x\rangle_{n}}, i \in V_{n}^{(p)},
\end{equation*}
which implies that
\begin{equation} \label{32}
(F_{d})^{*}(x)-(F_{e})^{*}(x)-(F_{d-e})^{*}(x)=-h(x), d \neq e \in {V_{m}^{(p)}}^{*}.
\end{equation}
Let $\{\alpha_{1}, \dots, \alpha_{m}\}$ be a basis of $V_{m}^{(p)}$ over $\mathbb{F}_{p}$. For any $x \in V_{n}^{(p)}$, let $G(x) \in V_{m}^{(p)}$ be defined by the following system:
\begin{equation*}
\left\{
\begin{split}
& \langle \alpha_{1}, G(x)\rangle_{m}=(F_{\alpha_{1}})^{*}(x)-h(x),\\
& \langle \alpha_{2}, G(x)\rangle_{m}=(F_{\alpha_{2}})^{*}(x)-h(x),\\
& \ \ \ \ \ \ \vdots\\
& \langle \alpha_{m}, G(x)\rangle_{m}=(F_{\alpha_{m}})^{*}(x)-h(x).
\end{split}
\right.
\end{equation*}
For any $c \in {V_{m}^{(p)}}^{*}$, denote $c=a_{i_{1}}\alpha_{i_{1}}+\dots+a_{i_{t}}\alpha_{i_{t}}$, where $a_{i_{j}} \in \mathbb{F}_{p}^{*}, 1\leq j \leq t, 1\leq i_{1} < \dots <i_{t}\leq m, 1\leq t\leq m$. Then
\begin{equation} \label{33}
G_{c}(x)=\sum_{j=1}^{t}a_{i_{j}}\langle \alpha_{i_{j}}, G(x)\rangle_{m}=\sum_{j=1}^{t}a_{i_{j}}(F_{\alpha_{i_{j}}})^{*}(x)-h(x)\sum_{j=1}^{t}a_{i_{j}}.
\end{equation}
For any $a \in \mathbb{F}_{p}^{*}$ and $c \in {V_{m}^{(p)}}^{*}$, we first show
\begin{equation} \label{34}
(F_{ac})^{*}(x)=a(F_{c})^{*}(x)-(a-1)h(x).
\end{equation}
Obviously, when $a=1$, Eq. (34) holds. Suppose Eq. (34) holds for $a=k$, where $1\leq k \leq p-2$. When $a=k+1$, by Eq. (32), we have
\begin{equation*}
\begin{split}
(F_{(k+1)c})^{*}(x)&=(F_{kc})^{*}(x)+(F_{c})^{*}(x)-h(x)\\
&=k(F_{c})^{*}(x)-(k-1)h(x)+(F_{c})^{*}(x)-h(x)\\
&=(k+1)(F_{c})^{*}(x)-kh(x).
\end{split}
\end{equation*}
Then Eq. (34) also holds for $a=k+1$. By induction, Eq. (34) holds. When $t=1$, by Eqs. (33) and (34),
\begin{equation} \label{35}
(F_{c})^{*}(x)=G_{c}(x)+h(x).
\end{equation}
Suppose Eq. (35) holds for $t=s$. When $t=s+1$, denote $c'=\sum_{j=1}^{s}a_{i_{j}}\alpha_{i_{j}}$, by Eq. (32), we have
\begin{equation*}
\begin{split}
(F_{c})^{*}(x)&=(F_{c'+a_{i_{s+1}}\alpha_{i_{s+1}}})^{*}(x)\\
&=(F_{c'})^{*}(x)+(F_{a_{i_{s+1}}\alpha_{i_{s+1}}})^{*}(x)-h(x)\\
&=G_{c'}(x)+h(x)+G_{a_{i_{s+1}}\alpha_{i_{s+1}}}(x)+h(x)-h(x)\\
&=G_{c}(x)+h(x).
\end{split}
\end{equation*}
Then Eq. (35) also holds for $t=s+1$. By induction, Eq. (35) holds.
\end{proof}

For bent partitions of $V_{n}^{(2)}$ corresponding to vectorial dual-bent functions, a characterization in terms of Hadamard matrices was given in \cite{WFWY2024A}. For general bent partitions of $V_{n}^{(2)}$, we give a characterization in terms of Hadamard matrices, which directly follows from \cite[Proposition 2]{WFWY2024A} and Proposition 2.

\begin{theorem} \label{Theorem 5}
Let $F: V_{n}^{(2)}\rightarrow V_{m}^{(2)}$, where $n$ is even and $2\leq m\leq \frac{n}{2}$. Define
\begin{equation*}
H_{c}=\left[\zeta_{p}^{F_{c}(x+y)}\right]_{x, y \in V_{n}^{(2)}}, c \in {V_{m}^{(2)}}^{*}.
\end{equation*}
Then $\{D_{F, i}, i \in V_{m}^{(2)}\}$ is a bent partition if and only if $H_{c}, c \in {V_{m}^{(2)}}^{*}$, are Hadamard matrices for which $H_{d}H_{e}H_{d+e}, d \neq e \in {V_{m}^{(2)}}^{*}$, are all the same.
\end{theorem}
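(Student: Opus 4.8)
The plan is to obtain Theorem 5 as the binary specialization of the chain of equivalences already available in this paper, namely the characterization of binary bent partitions recalled in Remark 2 together with Proposition 2. First I would record the simplifications that occur when $p=2$. Since $\zeta_{2}=-1$, every entry of $H_{c}$ lies in $\{1,-1\}$, so each $H_{c}$ is real and $\overline{H_{c}}=H_{c}$; moreover $F_{c}(x+y)=F_{c}(y+x)$ makes $H_{c}$ symmetric, whence $\overline{H_{c}}^{\top}=H_{c}$. Over $\mathbb{F}_{2}$ one also has $x-y=x+y$ and $d-e=d+e$, so the triple product appearing in Proposition 2 collapses: $H_{d}\overline{H_{e}}^{\top}\overline{H_{d-e}}^{\top}=H_{d}H_{e}H_{d+e}$.

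Next I would dispose of the weak-regularity clauses in Proposition 2, showing that they are automatic in characteristic two. Every Boolean bent function is regular, i.e. weakly regular with $\varepsilon_{F_{c}}=1$, so the requirement that $\varepsilon_{F_{c}}=\varepsilon$ for a constant $\varepsilon\in\{\pm 1\}$ holds for free with $\varepsilon=1$. Likewise, for $p=2$ one has $U_{2}^{(1)}=\{1,-1\}$ and $\nu_{H_{c}}=1$ by definition, and the row sums of the shifted matrix $[(-1)^{F_{c}(x+y)-\langle z,\,x+y\rangle_{n}}]_{x,y}$ equal $W_{F_{c}}(z)=2^{n/2}(-1)^{(F_{c})^{*}(z)}$, so $2^{-n/2}s_{z}=(-1)^{(F_{c})^{*}(z)}\in U_{2}^{(1)}$ and each Hadamard matrix $H_{c}$ is automatically of weakly regular type with $\nu_{H_{c}}=1$. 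Hence, in the binary case, statement $\mathrm{(i)}$ of Proposition 2 reduces to ``$F$ is vectorial bent and $(F_{c})^{*}=G_{c}+h$ for some $G,h$,'' i.e.\ Eq.~(29), while statement $\mathrm{(ii)}$ reduces to ``$H_{c}$ are Hadamard matrices with $H_{d}H_{e}H_{d+e}$ all the same.''

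I would then assemble the equivalences. By Remark 2, which invokes \cite[Proposition 2]{WFWY2024A}, for $p=2$ the partition $\{D_{F,i},\,i\in V_{m}^{(2)}\}$ is a bent partition if and only if $F$ is vectorial bent and Eq.~(29) holds; by the previous paragraph this is exactly statement $\mathrm{(i)}$ of Proposition 2 in the binary setting. Proposition 2 supplies $\mathrm{(i)}\Leftrightarrow\mathrm{(ii)}$, and the binary form of $\mathrm{(ii)}$ computed above is precisely the Hadamard-matrix condition claimed in the theorem. Chaining these three equivalences yields Theorem 5.

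The analytic bulk of the argument is carried entirely by Proposition 2, so the only genuine work is the binary bookkeeping of the second paragraph: one must verify carefully that each auxiliary hypothesis in statements $\mathrm{(i)}$ and $\mathrm{(ii)}$---the constancy of $\varepsilon_{F_{c}}$, the weakly-regular-type property, and the value $\nu_{H_{c}}=\varepsilon$---is vacuous when $p=2$, and that the conjugate transpose together with the index shift $d-e\mapsto d+e$ reduce the triple product to $H_{d}H_{e}H_{d+e}$. I expect this verification, rather than any new estimate, to be the main (and only) obstacle; once it is in place the theorem drops out by directly citing Remark 2 and Proposition 2.
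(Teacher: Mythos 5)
Your proposal is correct and follows essentially the same route as the paper, which states that Theorem 5 ``directly follows from \cite[Proposition 2]{WFWY2024A} and Proposition 2''; your second paragraph simply makes explicit the binary bookkeeping (regularity of Boolean bent functions, automatic weakly-regular type with $\nu_{H_{c}}=1$, symmetry and realness of $H_{c}$, and $d-e=d+e$) that the paper leaves implicit. No gaps.
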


By Theorem 4, the proof of Proposition 2, and Remark 3, we directly obtain the following result.

\begin{corollary} \label{Corollary 1}
Let $n, m$ be positive integers with $2 \mid n, m\leq \frac{n}{2}$, and $m\geq 2$ if $p=2$. Let $F: V_{n}^{(p)}\rightarrow V_{m}^{(p)}$ be a vectorial bent function for which all component functions are regular or all are weakly regular but not regular. If the $p$-ary functions $(F_{d})^{*}(x)+(F_{e})^{*}(x)-(F_{d+e})^{*}(x), d\neq -e \in {V_{m}^{(p)}}^{*}$, are all the same, then $\{D_{F, i}, i \in V_{m}^{(p)}\}$ is a bent partition belonging to class $\mathcal{WBP}$. Furthermore, if $(F_{d})^{*}(x)+(F_{e})^{*}(x)-(F_{d+e})^{*}(x), d\neq -e \in {V_{m}^{(p)}}^{*}$, are the same nonzero function, then $\{D_{F, i}, i \in V_{m}^{(p)}\}$ is a bent partition that do not correspond to vectorial dual-bent functions.
\end{corollary}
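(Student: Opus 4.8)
The plan is to reduce the whole statement to Theorem~\ref{Theorem 4}, whose only nontrivial hypothesis is the additive decomposition $(F_c)^*(x)=G_c(x)+h(x)$, $c\in {V_m^{(p)}}^*$. The key observation is that the assumption of the corollary is \emph{literally} the identity that was produced, and then exploited, in the $(\mathrm{ii})\Rightarrow(\mathrm{i})$ direction of Proposition~\ref{Proposition 2}. Concretely, set $h(x)$ to be the common value of $(F_d)^*(x)+(F_e)^*(x)-(F_{d+e})^*(x)$; this is well defined precisely because the hypothesis says this expression does not depend on the admissible pair $d\neq -e$ in ${V_m^{(p)}}^*$. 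After the relabeling $a=d-e,\ b=e$ (so $a+b=d$) one checks that this is exactly Eq.~\eqref{32}, i.e. $(F_a)^*(x)+(F_b)^*(x)-(F_{a+b})^*(x)=h(x)$ for all $a,b,a+b\in {V_m^{(p)}}^*$, with the constraint $a\neq -b$ corresponding to $a+b\neq 0$. Hence no new functional equation needs to be derived.

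With this identity available, I would define $G\colon V_n^{(p)}\to V_m^{(p)}$ exactly as in the proof of Proposition~\ref{Proposition 2}: fix a basis $\alpha_1,\dots,\alpha_m$ of $V_m^{(p)}$ and let $G(x)$ be the unique vector with $\langle \alpha_i,G(x)\rangle_m=(F_{\alpha_i})^*(x)-h(x)$ for $1\le i\le m$. The equality $(F_c)^*(x)=G_c(x)+h(x)$ for all $c\in {V_m^{(p)}}^*$ then follows by the same two-step induction: an inner induction on the scalar $a\in\mathbb{F}_p^*$ yielding $(F_{ac})^*(x)=a(F_c)^*(x)-(a-1)h(x)$ (Eq.~\eqref{34}), and an outer induction on the number of basis vectors occurring in $c$ (Eq.~\eqref{35}), both driven by the additive identity above. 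It should also be recorded that the hypothesis ``all component functions regular, or all weakly regular but not regular'' furnishes a single constant $\varepsilon\in\{\pm1\}$, independent of $c$ ($\varepsilon=1$ in the regular case, $\varepsilon=-1$ otherwise), which is exactly the weak-regularity condition demanded by Theorem~\ref{Theorem 4}.

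At this point Theorem~\ref{Theorem 4} applies verbatim and shows that $\{D_{F,i},\ i\in V_m^{(p)}\}$ is a bent partition belonging to class $\mathcal{WBP}$, which is the first assertion. For the ``furthermore'', observe that the $h$ constructed above coincides with the common function of the hypothesis; therefore, if that common function is nonzero, $h$ is not identically zero, and Remark~\ref{Remark 3} gives that $F$ is \emph{not} vectorial dual-bent, so the associated partition does not correspond to a vectorial dual-bent function.

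I expect the difficulty here to be bookkeeping rather than conceptual, since the corollary is advertised as a direct consequence. The one point requiring care is to match the two index conventions --- the present $d+e$ form against the $d-e$ form of Eq.~\eqref{32} --- and to verify that the inductive step propagates the coefficients of $h$ correctly, so that the final identity is $(F_c)^*=G_c+h$ with one and the same $h$ for every $c$. A secondary interpretive point is that the ``furthermore'' is to be read for the vectorial bent function $F$ canonically attached to the partition (as in Theorem~\ref{Theorem 4} and Remark~\ref{Remark 3}), whose failure to be vectorial dual-bent is detected exactly by $h\neq 0$.
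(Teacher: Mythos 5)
Your proposal is correct and follows essentially the same route the paper intends: the paper derives Corollary 1 precisely by observing that the hypothesis is Eq.~(32) in disguise, reusing the construction of $G$ and the two inductions from the proof of Proposition~2 to obtain $(F_c)^*=G_c+h$, and then invoking Theorem~4 and Remark~3. Your index relabeling between the $d+e$ and $d-e$ conventions is the right bookkeeping, and your closing caveat about reading the ``furthermore'' through the canonical $F$ matches how the paper itself treats it.
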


In the following, based on Theorem 4 and Corollary 1, new constructions of bent partitions are provided.

In \cite{AFKMWW2025A}, an explicit construction of bent partitions not corresponding to vectorial dual-bent functions was given. Based on Corollary 1, we generalize the construction.

\begin{proposition} \label{Proposition 3}
Let $m, n$ be positive integers with $m \mid n, m \neq n$, and $m\geq 2$ if $p=2$. Let $\pi$ be a permutation over $\mathbb{F}_{p^n}$ with $\pi(cx)=\theta(c)\pi(x), c \in \mathbb{F}_{p^m}^{*}, x \in \mathbb{F}_{p^n}$, where $\theta$ is a permutation over $\mathbb{F}_{p^m}^{*}$ with $\theta^{-1}(d^{-1})+\theta^{-1}(e^{-1})=\theta^{-1}((d+e)^{-1})$ for any $d\neq -e \in \mathbb{F}_{p^m}^{*}$. Define $F: \mathbb{F}_{p^n} \times \mathbb{F}_{p^n}\rightarrow \mathbb{F}_{p^m}$ as
\begin{equation*}
F(x, y)=Tr_{m}^{n}(x \pi(y))+G(y),
\end{equation*}
where $G: \mathbb{F}_{p^n}\rightarrow \mathbb{F}_{p^m}$ is a nonzero function satisfying $G(cx)=\theta(c)G(x), c \in \mathbb{F}_{p^m}^{*}, x \in \mathbb{F}_{p^n}$. Then $\{D_{F, i}, i \in \mathbb{F}_{p^m}\}$ is a bent partition belonging to class $\mathcal{WBP}$ that do not correspond to vectorial dual-bent functions.
\end{proposition}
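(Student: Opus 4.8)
The plan is to verify the hypotheses of Corollary~1 for $F$: namely that $F$ is vectorial bent with all component functions regular (so that the preimage partition belongs to class $\mathcal{WBP}$ with $\varepsilon=1$), that the functions $(F_d)^{*}+(F_e)^{*}-(F_{d+e})^{*}$ are all equal for $d\neq -e\in\mathbb{F}_{p^m}^{*}$, and that this common function is nonzero. The dimension conditions of Corollary~1 hold since the domain $\mathbb{F}_{p^n}\times\mathbb{F}_{p^n}$ has dimension $2n$ and $m\mid n$, $m\neq n$ forces $m\leq \frac{n}{2}\leq \frac{2n}{2}$.

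First I would compute the component functions. Using transitivity of the trace, for $c\in\mathbb{F}_{p^m}^{*}$ one gets $F_{c}(x,y)=Tr_{1}^{n}(cx\pi(y))+Tr_{1}^{m}(cG(y))$, which is of Maiorana--McFarland type because $y\mapsto c\pi(y)$ is a permutation of $\mathbb{F}_{p^n}$. Summing the Walsh transform over $x$ first collapses it to the single value $y_{0}=\pi^{-1}(c^{-1}a)$, yielding $W_{F_{c}}(a,b)=p^{n}\zeta_{p}^{Tr_{1}^{m}(cG(y_{0}))-Tr_{1}^{n}(by_{0})}$. Hence each $F_{c}$ is a regular bent function with dual $(F_{c})^{*}(a,b)=Tr_{1}^{m}(cG(y_{0}))-Tr_{1}^{n}(by_{0})$ and $\varepsilon_{F_{c}}=1$.

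The key step is to simplify the dual via the homogeneity hypotheses. From $\pi(cx)=\theta(c)\pi(x)$ one derives $\pi^{-1}(c^{-1}a)=\theta^{-1}(c^{-1})\pi^{-1}(a)$, and then $G(\theta^{-1}(c^{-1})\pi^{-1}(a))=c^{-1}G(\pi^{-1}(a))$, so that $Tr_{1}^{m}(cG(y_{0}))=Tr_{1}^{m}(G(\pi^{-1}(a)))=:\psi(a)$ becomes independent of $c$, while the linear term becomes $Tr_{1}^{n}(\theta^{-1}(c^{-1})\,b\pi^{-1}(a))$. Therefore $(F_{c})^{*}(a,b)=\psi(a)-Tr_{1}^{n}(\theta^{-1}(c^{-1})\,b\pi^{-1}(a))$, and forming the combination for $d\neq -e$ gives
\[
(F_{d})^{*}+(F_{e})^{*}-(F_{d+e})^{*}=\psi(a)-Tr_{1}^{n}\big(\big(\theta^{-1}(d^{-1})+\theta^{-1}(e^{-1})-\theta^{-1}((d+e)^{-1})\big)b\pi^{-1}(a)\big),
\]
which equals $\psi$ precisely by the assumed additivity $\theta^{-1}(d^{-1})+\theta^{-1}(e^{-1})=\theta^{-1}((d+e)^{-1})$.

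Finally I would check that $\psi(a)=Tr_{1}^{m}(G(\pi^{-1}(a)))$ is not identically zero. Since $G$ is nonzero, $G(z_{0})\neq 0$ for some $z_{0}$; then $\{G(cz_{0}):c\in\mathbb{F}_{p^m}^{*}\}=\{\theta(c)G(z_{0})\}=\mathbb{F}_{p^m}^{*}$ because $\theta$ permutes $\mathbb{F}_{p^m}^{*}$, so $G$ attains every nonzero value of $\mathbb{F}_{p^m}$; as $Tr_{1}^{m}$ is surjective some such value has nonzero trace, whence $\psi\not\equiv 0$. Corollary~1 then yields that $\{D_{F,i},i\in\mathbb{F}_{p^m}\}$ is a bent partition belonging to class $\mathcal{WBP}$ that does not correspond to vectorial dual-bent functions. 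The main obstacle is the middle step: correctly propagating the two homogeneity relations (especially the identity $\pi^{-1}(c^{-1}a)=\theta^{-1}(c^{-1})\pi^{-1}(a)$) so that all $c$-dependence concentrates in the coefficient $\theta^{-1}(c^{-1})$ of the linear part, which is exactly what the additivity hypothesis on $\theta^{-1}$ is designed to annihilate.
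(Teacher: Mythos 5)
Your proposal is correct and follows essentially the same route as the paper: compute the duals of the Maiorana--McFarland components (the paper cites a known formula where you derive it directly), use the two homogeneity relations to reduce $(F_d)^{*}+(F_e)^{*}-(F_{d+e})^{*}$ to the common nonzero function $Tr_{1}^{m}(G(\pi^{-1}(x)))$, and conclude via Corollary~1. The surjectivity argument for the nonvanishing of this function is also the paper's.
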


\begin{proof}
Since $\pi(cx)=\theta(c)\pi(x), c \in \mathbb{F}_{p^m}^{*}, x \in \mathbb{F}_{p^n}$, then $\pi^{-1}(cx)=\theta^{-1}(c)\pi^{-1}(x), c \in \mathbb{F}_{p^m}^{*}, x \in \mathbb{F}_{p^n}$. For any $c \in \mathbb{F}_{p^m}^{*}$, $F_{c}(x, y)=Tr_{1}^{n}(cx \pi (y))+Tr_{1}^{m}(cG(y))$. Then by \cite[Theorem 4]{Meidl2022A}, $F_{c}$ is a Maiorana-McFarland bent function with $F_{c}$ is regular and
\begin{equation*}
\begin{split}
(F_{c})^{*}(x, y)&=-Tr_{1}^{n}(\pi^{-1}(c^{-1}x)y)+Tr_{1}^{m}(cG(\pi^{-1}(c^{-1}x)))\\
&=-Tr_{1}^{n}(\theta^{-1}(c^{-1})\pi^{-1}(x)y)+Tr_{1}^{m}(cG(\theta^{-1}(c^{-1})\pi^{-1}(x))).
\end{split}
\end{equation*}
Since $\theta^{-1}(d^{-1})+\theta^{-1}(e^{-1})=\theta^{-1}((d+e)^{-1}), d\neq -e \in \mathbb{F}_{p^m}^{*}$, and $G(cx)=\theta(c)G(x), c \in \mathbb{F}_{p^m}^{*}$, then it is easy to check that for any $d\neq -e \in \mathbb{F}_{p^m}^{*}$,
\begin{equation*}
(F_{d})^{*}(x, y)+(F_{e})^{*}(x, y)-(F_{d+e})^{*}(x, y)=Tr_{1}^{m}(G(\pi^{-1}(x))).
\end{equation*}
Since $\theta$ is a permutation over $\mathbb{F}_{p^m}^{*}$ and $G$ is a nonzero function with $G(cx)=\theta(c)G(x), c \in \mathbb{F}_{p^m}^{*}$, we know that $G$ is surjective, and $Tr_{1}^{m}(G(\pi^{-1}(x)))$ is not the zero function. By Corollary 1, $\{D_{F, i}, i \in \mathbb{F}_{p^m}\}$ is a bent partition belonging to class $\mathcal{WBP}$ that do not correspond to vectorial dual-bent functions.
\end{proof}

\begin{remark}\label{Remark 4}
Some classes of known permutations $\pi$ that satisfying the condition in Proposition 3 are listed in \cite{AFKMWW2025A}. The explicit construction of bent partitions not corresponding to vectorial dual-bent functions given in \cite[Theorem 1]{AFKMWW2025A} can be obtained by Proposition 3 by setting $\pi(x)=x^{-a}, G(x)=M(x^{-b})$, where $a\equiv b\equiv p^{l} \mod (p^m-1), \mathrm{gcd}(a, p^n-1)=\mathrm{gcd}(b,p^n-1)=1$, and $M: \mathbb{F}_{p^n}\rightarrow \mathbb{F}_{p^m}$ is not the zero function with $M(cx)=cM(x), c \in \mathbb{F}_{p^m}^{*}, x \in \mathbb{F}_{p^n}$.
\end{remark}

Below we show that a modified construction of \cite{WFW2023Be} provides a secondary construction of vectorial bent functions with the conditions in Theorem 4, and then more bent partitions not corresponding to vectorial dual-bent functions can be obtained.

\begin{proposition}\label{Proposition 4}
Let $n, n', m$ be positive integers for which $n$ is even, $m\leq \frac{n}{2}, m \mid n', m\neq n'$, and $m\geq 2$ if $p=2$. For any $i \in \mathbb{F}_{p^m}$, let $R^{(i)}: V_{n}^{(p)}\rightarrow \mathbb{F}_{p^m}$ be a vectorial bent function with the following conditions.
\begin{itemize}
  \item Any component function of $R^{(i)}$ is weakly regular with $\varepsilon_{(R^{(i)})_{c}}=\varepsilon, c \in \mathbb{F}_{p^m}^{*}$, where $\varepsilon \in \{\pm 1\}$ is a constant independent of $i$ and $c$.
   \item There are functions $G^{(i)}: V_{n}^{(p)}\rightarrow \mathbb{F}_{p^m}$ and $h^{(i)}: V_{n}^{(p)}\rightarrow \mathbb{F}_{p}$ such that $((R^{(i)})_{c})^{*}(x)=(G^{(i)})_{c}(x)+h^{(i)}(x), c \in \mathbb{F}_{p^m}^{*}, x \in V_{n}^{(p)}$, and $h^{(0)}$ is not the zero function.
\end{itemize}
Let $\alpha, \beta \in \mathbb{F}_{p^{n'}}$ be linearly independent over $\mathbb{F}_{p^{m}}$, and let $P$ be a permutation over $\mathbb{F}_{p^{n'}}$ with $P(0)=0$. Define $F: V_{n}^{(p)} \times \mathbb{F}_{p^{n'}} \times \mathbb{F}_{p^{n'}}\rightarrow \mathbb{F}_{p^m}$ as
\begin{equation*}
F(x, y_{1}, y_{2})=R^{(Tr_{m}^{n'}(\alpha P(y_{1}y_{2}^{-1})))}(x)+Tr_{m}^{n'}(\beta P(y_{1}y_{2}^{-1})).
\end{equation*}
Then $\{D_{F, i}, i \in \mathbb{F}_{p^m}\}$ is a bent partition belonging to class $\mathcal{WBP}$ that do not correspond to vectorial dual-bent functions.
\end{proposition}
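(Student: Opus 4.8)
The plan is to reduce everything to Corollary 1: it suffices to show that $F$ is a vectorial bent function whose component functions $F_c$ ($c\in\mathbb{F}_{p^m}^*$) are all weakly regular with one common sign $\varepsilon$ (so that they are all regular, or all weakly regular but not regular), and that the $p$-ary functions $(F_d)^*+(F_e)^*-(F_{d+e})^*$, for $d\neq -e\in\mathbb{F}_{p^m}^*$, coincide in a single \emph{nonzero} function. All three facts will drop out of one Walsh-transform computation for each $F_c$.

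First I would compute $W_{F_c}(a,b_1,b_2)$ over the domain $V_n^{(p)}\times\mathbb{F}_{p^{n'}}\times\mathbb{F}_{p^{n'}}$. Summing over $x$ first and using that $(R^{(i)})_c$ is weakly regular with $\varepsilon_{(R^{(i)})_c}=\varepsilon$ and $((R^{(i)})_c)^*=(G^{(i)})_c+h^{(i)}$, the inner sum becomes $\varepsilon p^{\frac{n}{2}}\zeta_p^{(G^{(i)})_c(a)+h^{(i)}(a)}$ with $i=Tr_m^{n'}(\alpha P(z))$ and $z=y_1y_2^{-1}$. I then decompose the remaining sum over $(y_1,y_2)$ by the slope $z$, exactly as in the construction of \cite{WFW2023Be}, treating the degenerate line $y_2=0$ separately. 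For $b_1\neq 0$, writing $y_1=zy_2$ and summing over $y_2\in\mathbb{F}_{p^{n'}}^*$ produces the factor $p^{n'}\delta_0(b_1z+b_2)-1$, which isolates the single slope $z^*=-b_2b_1^{-1}$.

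The crux is that the two leftover pieces vanish. The $y_2=0$ line contributes $\sum_{y_1}\zeta_p^{-Tr_1^{n'}(b_1y_1)}=0$ since $b_1\neq 0$; and the $-1$ term contributes $-\sum_{z}\zeta_p^{((R^{(i(z))})_c)^*(a)+Tr_1^m(c\,Tr_m^{n'}(\beta P(z)))}$, which I claim is zero. Here the linear independence of $\alpha,\beta$ over $\mathbb{F}_{p^m}$ is essential: substituting $w=P(z)$ and grouping by the pair $(Tr_m^{n'}(\alpha w),Tr_m^{n'}(\beta w))$, which equidistributes over $\mathbb{F}_{p^m}\times\mathbb{F}_{p^m}$, the inner sum over the second coordinate is $\sum_{s\in\mathbb{F}_{p^m}}\zeta_p^{Tr_1^m(cs)}=0$ for $c\neq 0$. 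Hence $W_{F_c}(a,b_1,b_2)=\varepsilon p^{\frac{n+2n'}{2}}\zeta_p^{(G^{(i^*)})_c(a)+Tr_1^m(c\,Tr_m^{n'}(\beta P(z^*)))+h^{(i^*)}(a)}$ with $i^*=Tr_m^{n'}(\alpha P(z^*))$; the case $b_1=0$ (the infinite slope, where the line $y_2=0$ now carries the whole sum) is handled in the same way. This simultaneously shows that $F$ is vectorial bent, that all $F_c$ are weakly regular with $\varepsilon_{F_c}=\varepsilon$, and gives the explicit dual above.

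Finally I would read off the telescoping identity. In $(F_d)^*+(F_e)^*-(F_{d+e})^*$ the terms $(G^{(i^*)})_c(a)=Tr_1^m(cG^{(i^*)}(a))$ and $Tr_1^m(c\,Tr_m^{n'}(\beta P(z^*)))$ are additive in $c$, so they cancel because $d+e-(d+e)=0$, while the $h$-terms leave $h^{(i^*)}(a)$, a function of $(a,b_1,b_2)$ independent of $d,e$. Thus the common function required by Corollary 1 is $(a,b_1,b_2)\mapsto h^{(i^*)}(a)$. To see it is nonzero, take $b_2=0$, $b_1\neq 0$: then $z^*=0$ and $P(0)=0$, so $i^*=Tr_m^{n'}(\alpha P(0))=0$, and the value is $h^{(0)}(a)$, which is nonzero for some $a$ by hypothesis. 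Corollary 1 then yields that $\{D_{F,i},i\in\mathbb{F}_{p^m}\}$ is a bent partition belonging to class $\mathcal{WBP}$ that does not correspond to vectorial dual-bent functions. The main obstacle is the Walsh computation itself --- in particular the slope decomposition and the vanishing of the leftover sum via the equidistribution of $(Tr_m^{n'}(\alpha P(z)),Tr_m^{n'}(\beta P(z)))$ --- together with the bookkeeping for the degenerate line $y_2=0$; the telescoping and the nonvanishing step are then short.
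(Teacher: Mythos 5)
Your proof is correct and follows essentially the same route as the paper: the paper cites the Walsh-transform identity $W_{F_c}(a,b_1,b_2)=p^{n'}\zeta_p^{Tr_1^m(c\,Tr_m^{n'}(\beta P(-b_1^{-1}b_2)))}W_{(R^{(Tr_m^{n'}(\alpha P(-b_1^{-1}b_2))))})_c}(a)$ from the proof of \cite[Theorem 4]{WFW2023Be} (which you re-derive correctly via the slope decomposition and the equidistribution of $(Tr_m^{n'}(\alpha w),Tr_m^{n'}(\beta w))$, valid since $n'\geq 2m$), reads off the dual in the form $U_c+v$ with $v(x,y_1,y_2)=h^{(Tr_m^{n'}(\alpha P(-y_1^{-1}y_2)))}(x)$ nonzero because $h^{(0)}\neq 0$, and concludes by Theorem 4 and Remark 3. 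Your only deviation is to route the final step through Corollary 1 via the telescoping identity for $(F_d)^*+(F_e)^*-(F_{d+e})^*$ rather than exhibiting $U$ and $v$ directly, which is an equivalent packaging since Corollary 1 is itself derived from Theorem 4.
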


\begin{proof}
For any $c \in \mathbb{F}_{p^m}^{*}$ and $(a, b_{1}, b_{2}) \in V_{n}^{(p)} \times \mathbb{F}_{p^{n'}} \times \mathbb{F}_{p^{n'}}$, by the proof of \cite[Theorem 4]{WFW2023Be}, we have
\begin{equation*}
W_{F_{c}}(a, b_{1}, b_{2})=p^{n'}\zeta_{p}^{Tr_{1}^{m}(cTr_{m}^{n'}(\beta P(-b_{1}^{-1}b_{2})))}W_{(R^{(Tr_{m}^{n'}(\alpha P(-b_{1}^{-1}b_{2})))})_{c}}(a).
\end{equation*}
Then $F_{c}$ is weakly regular bent with $\varepsilon_{F_{c}}=\varepsilon$ and
\begin{equation*}
(F_{c})^{*}(x, y_{1}, y_{2})=(G^{(Tr_{m}^{n'}(\alpha P(-y_{1}^{-1}y_{2})))})_{c}(x)+Tr_{1}^{m}(cTr_{m}^{n'}(\beta P(-y_{1}^{-1}y_{2})))+h^{(Tr_{m}^{n'}(\alpha P(-y_{1}^{-1}y_{2})))}(x).
\end{equation*}
Denote
\begin{equation*}
\begin{split}
& U(x, y_{1}, y_{2})=G^{(Tr_{m}^{n'}(\alpha P(-y_{1}^{-1}y_{2})))}(x)+Tr_{m}^{n'}(\beta P(-y_{1}^{-1}y_{2})),
\\ & v(x, y_{1}, y_{2})=h^{(Tr_{m}^{n'}(\alpha P(-y_{1}^{-1}y_{2})))}(x).
\end{split}
\end{equation*}
Then $(F_{c})^{*}(x, y_{1}, y_{2})=U_{c}(x, y_{1}, y_{2})+v(x, y_{1}, y_{2})$. Since $h^{(0)}$ is not the zero function, then $v(x, y_{1}, y_{2})$ is not the zero function. By Theorem 4 and Remark 3, $\{D_{F, i}, i \in \mathbb{F}_{p^m}\}$ is a bent partition not corresponding to vectorial dual-bent functions.
\end{proof}

We give an example to illustrate Proposition 4.

\begin{example} \label{Example 1}
Let $p=3, m=2, n=8, n'=4, \alpha=1, \beta \in \mathbb{F}_{3^4}^{*} \backslash \mathbb{F}_{3^2}^{*}$ and $P$ be the identity map over $\mathbb{F}_{3^4}$. Let $R^{(0)}: \mathbb{F}_{3^4} \times \mathbb{F}_{3^4}\rightarrow \mathbb{F}_{3^2}$ be given by $R^{(0)}(x_{1}, x_{2})=Tr_{2}^{4}(x_{1}^{69}x_{2}+x_{1}^{29})$, and $R^{(i)}: \mathbb{F}_{3^4} \times \mathbb{F}_{3^4}\rightarrow \mathbb{F}_{3^2}, i \in \mathbb{F}_{3^2}^{*}$, be given by $R^{(i)}(x_{1}, x_{2})=Tr_{2}^{4}(x_{1}^{79}x_{2})$. By Proposition 3, $R^{(i)}, i \in \mathbb{F}_{3^2}$, satisfy the conditions in Proposition 4. By Proposition 4, $\{D_{F, i}, i \in \mathbb{F}_{3^2}\}$ is a bent partition of $\mathbb{F}_{3^4} \times \mathbb{F}_{3^4} \times \mathbb{F}_{3^4} \times \mathbb{F}_{3^4}$, where $F(x_{1}, x_{2}, y_{1}, y_{2})={Tr_{2}^{4}(y_{1}y_{2}^{-1})}^{8}Tr_{2}^{4}(x_{1}^{79}x_{2}-x_{1}^{69}x_{2}-x_{1}^{29})+Tr_{2}^{4}(x_{1}^{69}x_{2}+x_{1}^{29}+\beta y_{1}y_{2}^{-1})$ is not a vectorial dual-bent function.
\end{example}

The following theorem gives another secondary construction of vectorial bent functions satisfying the conditions in Theorem 4, and more bent partitions (not) corresponding to vectorial dual-bent functions can be obtained.

\begin{theorem} \label{Theorem 6}
Let $n, n', m, m', r$ be positive integers for which $n, n'$ are even, $m \leq \frac{n}{2}, m'\leq \frac{n'}{2}, r\leq m, r\leq m'$, and $m, m'\geq 2$ if $p=2$. Let $R: V_{n}^{(p)}\rightarrow V_{m}^{(p)}$ and $R': V_{n'}^{(p)}\rightarrow V_{m'}^{(p)}$ be vectorial bent functions satisfying the following conditions.
\begin{itemize}
   \item Any component function of $R$ (respectively, $R'$) is weakly regular with $\varepsilon_{R_{c}}=\varepsilon, c \in {V_{m}^{(p)}}^{*}$ (respectively, $\varepsilon_{R'_{c}}=\varepsilon', c \in {V_{m'}^{(p)}}^{*}$), where $\varepsilon \in \{\pm 1\}$ (respectively, $\varepsilon' \in \{\pm 1\}$) is a constant.
   \item There are functions $G: V_{n}^{(p)}\rightarrow V_{m}^{(p)}$ (respectively, $G': V_{n'}^{(p)}\rightarrow V_{m'}^{(p)}$) and $h: V_{n}^{(p)}\rightarrow \mathbb{F}_{p}$ (respectively, $h': V_{n'}^{(p)}\rightarrow \mathbb{F}_{p}$) such that $(R_{c})^{*}(x)=G_{c}(x)+h(x), c \in {V_{m}^{(p)}}^{*}, x \in V_{n}^{(p)}$ (respectively, $(R'_{c})^{*}(x)=G'_{c}(x)+h'(x), c \in {V_{m'}^{(p)}}^{*}, x \in V_{n'}^{(p)}$).
\end{itemize}
Assume that $K: V_{m}^{(p)} \times V_{m'}^{(p)}\rightarrow V_{r}^{(p)}$ satisfies that for any $x \in V_{m}^{(p)}$, $S^{(x)}: V_{m'}^{(p)}\rightarrow V_{r}^{(p)}$ defined by $S^{(x)}(y)=K(x, y), y \in V_{m'}^{(p)}$, is balanced, and for any $y \in V_{m'}^{(p)}$, $T^{(y)}: V_{m}^{(p)}\rightarrow V_{r}^{(p)}$ defined by $T^{(y)}(x)=K(x, y), x \in V_{m}^{(p)}$, is balanced. Define $F: V_{n}^{(p)} \times V_{n'}^{(p)}\rightarrow V_{r}^{(p)}$ as
\begin{equation*}
F(x, y)=K(R(x), R'(y)).
\end{equation*}
Then $\{D_{F, i}, i \in V_{r}^{(p)}\}$ is a bent partition belonging to class $\mathcal{WBP}$. Furthermore, $F$ is a vectorial dual-bent function if and only if $h$ and $h'$ are both constant functions with $h(x)=-h'(y), x \in V_{n}^{(p)}, y \in V_{n'}^{(p)}$.
\end{theorem}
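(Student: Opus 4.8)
The plan is to show that the combined function $F$ itself satisfies the hypotheses of Theorem 4, with combining data built out of those of $R$ and $R'$, and then to read off both conclusions from Theorem 4 and Remark 3. Concretely, I would aim to prove that every component $F_{c}$ ($c \in {V_{r}^{(p)}}^{*}$) is weakly regular bent with $\varepsilon_{F_{c}}=\varepsilon\varepsilon'$, and that $(F_{c})^{*}(x, y)=\widetilde{G}_{c}(x, y)+\widetilde{h}(x, y)$, where $\widetilde{G}(x, y)=K(G(x), G'(y))$ and $\widetilde{h}(x, y)=h(x)+h'(y)$ are independent of $c$. Once this is established, Theorem 4 immediately gives that $\{D_{F, i}, i \in V_{r}^{(p)}\}$ is a bent partition belonging to class $\mathcal{WBP}$, and Remark 3 reduces the vectorial dual-bent question to whether $\widetilde{h}$ vanishes identically.

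The first main step is a direct Walsh transform computation. Writing $K_{c}(u, v)=\langle c, K(u, v)\rangle_{r}$ and grouping the summation over $(x, y)$ according to the values $R(x)=u$ and $R'(y)=v$, I would obtain
\begin{equation*}
W_{F_{c}}(a, b)=\sum_{u \in V_{m}^{(p)}}\sum_{v \in V_{m'}^{(p)}}\zeta_{p}^{K_{c}(u, v)}\,\chi_{-a}(D_{R, u})\,\chi_{-b}(D_{R', v}).
\end{equation*}
For the inner character sums I would substitute the closed form established in the proof of Theorem 4 (Eq. (30)), namely $\chi_{-a}(D_{R, u})=p^{n-m}\delta_{0}(a)+\varepsilon p^{\frac{n}{2}-m}\zeta_{p}^{h(a)}(p^{m}\delta_{G(a)}(u)-1)$ together with the analogous expression for $\chi_{-b}(D_{R', v})$, and then expand the resulting product into its four pieces.

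The crux — and the place where the two balancedness hypotheses on $K$ are used — is the cancellation of every term except the principal one. Since each $S^{(u)}$ and each $T^{(v)}$ is balanced, for any fixed $u_{0}, v_{0}$ and any $c \neq 0$ one has $\sum_{v}\zeta_{p}^{K_{c}(u_{0}, v)}=0$, $\sum_{u}\zeta_{p}^{K_{c}(u, v_{0})}=0$, and hence $\sum_{u, v}\zeta_{p}^{K_{c}(u, v)}=0$. Feeding these three identities into the expanded product annihilates all the $\delta_{0}(a)$, $\delta_{0}(b)$ and constant cross-terms, leaving only the contribution of $\delta_{G(a)}(u)\delta_{G'(b)}(v)$, which evaluates to
\begin{equation*}
W_{F_{c}}(a, b)=\varepsilon\varepsilon' p^{\frac{n+n'}{2}}\zeta_{p}^{K_{c}(G(a), G'(b))+h(a)+h'(b)}.
\end{equation*}
This holds for all $(a, b)$, including the degenerate cases $a=0$ or $b=0$, and it exhibits $F_{c}$ as weakly regular bent with sign $\varepsilon\varepsilon'$ and dual $(F_{c})^{*}=\widetilde{G}_{c}+\widetilde{h}$, completing the verification of the Theorem 4 hypotheses. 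I expect this cancellation bookkeeping to be the main obstacle: one must carefully track the four pieces of the expanded product and confirm which vanishing identity applies to each, though no genuinely new idea is required beyond the coordinatewise balancedness of $K$.

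Finally, for the dual-bent characterization I would invoke Remark 3: the vectorial bent function $F$, which satisfies the conditions of Theorem 4 with discrepancy function $\widetilde{h}(x, y)=h(x)+h'(y)$, is vectorial dual-bent if and only if $\widetilde{h}\equiv 0$, that is, $h(x)+h'(y)=0$ for all $x \in V_{n}^{(p)}$ and $y \in V_{n'}^{(p)}$. Fixing $y$ forces $h$ to be constant and fixing $x$ forces $h'$ to be constant, with the two constants negatives of one another; conversely, any such pair of constants yields $\widetilde{h}\equiv 0$. This is precisely the stated criterion $h(x)=-h'(y)$, so the theorem follows.
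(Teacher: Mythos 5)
Your proposal is correct and follows essentially the same route as the paper's proof: group the Walsh sum by the values of $R$ and $R'$, substitute the character-sum formula from Eq. (30), use the coordinatewise balancedness of $K$ to annihilate all but the principal term, and conclude via Theorem 4 and Remark 3, including the same observation that $h(x)+h'(y)\equiv 0$ forces both functions to be constants that are negatives of each other.
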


\begin{proof}
For any $c \in {V_{r}^{(p)}}^{*}$ and $(a, b) \in V_{n}^{(p)} \times V_{n'}^{(p)}$, by Eq. (30), we have
\begin{small}
\begin{equation*}
\begin{split}
&W_{F_{c}}(-a, -b)\\
&=\sum_{x \in V_{n}^{(p)}, y \in V_{n'}^{(p)}}\zeta_{p}^{\langle c, K(R(x), R'(y))\rangle_{r}+\langle a, x\rangle_{n}+\langle b, y\rangle_{n'}}\\
&=\sum_{i \in V_{m}^{(p)}, j \in V_{m'}^{(p)}}\zeta_{p}^{\langle c, K(i, j)\rangle_{r}}\chi_{a}(D_{R, i})\chi_{b}(D_{R', j})\\
&=\sum_{i \in V_{m}^{(p)}, j \in V_{m'}^{(p)}}\zeta_{p}^{\langle c, K(i, j)\rangle_{r}}(p^{n-m}\delta_{0}(a)+\varepsilon p^{\frac{n}{2}-m}\zeta_{p}^{h(-a)}(p^{m}\delta_{G(-a)}(i)-1))\\
& \ \ \ \ \ \ \ \ \ \ \ \ \ \ \ \ \times (p^{n'-m'}\delta_{0}(b)+\varepsilon' p^{\frac{n'}{2}-m'}\zeta_{p}^{h'(-b)}(p^{m'}\delta_{G'(-b)}(j)-1))\\
&=(p^{n-m}\delta_{0}(a)-\varepsilon p^{\frac{n}{2}-m}\zeta_{p}^{h(-a)})\sum_{j \in V_{m'}^{(p)}}(p^{n'-m'}\delta_{0}(b)+\varepsilon' p^{\frac{n'}{2}-m'}\zeta_{p}^{h'(-b)}(p^{m'}\delta_{G'(-b)}(j)-1))\sum_{i \in V_{m}^{(p)}}\zeta_{p}^{\langle c, T^{(j)}(i)\rangle_{r}}\\
& \ \ +\varepsilon p^{\frac{n}{2}}\zeta_{p}^{h(-a)}(p^{n'-m'}\delta_{0}(b)-\varepsilon'p^{\frac{n'}{2}-m'}\zeta_{p}^{h'(-b)})\sum_{j \in V_{m'}^{(p)}}\zeta_{p}^{\langle c, S^{(G(-a))}(j)\rangle_{r}}\\
& \ \ +\varepsilon \varepsilon' p^{\frac{n+n'}{2}}\zeta_{p}^{\langle c, K(G(-a), G'(-b))\rangle_{r}+h(-a)+h'(-b)}.
\end{split}
\end{equation*}
\end{small}For any $i \in V_{m}^{(p)}$ and $j \in V_{m'}^{(p)}$, since $S^{(i)}$ and $T^{(j)}$ are balanced, then
\begin{equation*}
\sum_{x \in V_{m'}^{(p)}}\zeta_{p}^{\langle c, S^{(i)}(x)\rangle_{r}}=\sum_{x \in V_{m}^{(p)}}\zeta_{p}^{\langle c, T^{(j)}(x)\rangle_{r}}=0, c \in {V_{r}^{(p)}}^{*}.
\end{equation*}
Therefore, $W_{F_{c}}(-a, -b)=\varepsilon \varepsilon' p^{\frac{n+n'}{2}}\zeta_{p}^{\langle c, K(G(-a), G'(-b))\rangle_{r}+h(-a)+h'(-b)}$. Note that $h(x)+h'(y), x \in V_{n}^{(p)}, y \in V_{n'}^{(p)}$, is the zero function if and only if $h$ and $h'$ are both constant functions with $h(x)=-h'(y)$. Then the result follows from Theorem 4 and Remark 3.
\end{proof}

\begin{remark} \label{Reamrk 5}
The construction of vectorial Boolean bent functions given in \cite[Proposition 12.1.7]{Mesnager2016Be} can be obtained by Theorem 6 with $R(x_{1}, x_{2})=x_{1}x_{2}^{-1}$ and $R'(y_{1}, y_{2})=y_{1}y_{2}^{-1}$.
\end{remark}

The following corollary is directly obtained by Theorem 6.

\begin{corollary} \label{Corollary 2}
With the same notation as in Theorem 6. Let $m=m'=r$ and $K(x, y)=x+y$, then $\{D_{F, i}, i \in V_{r}^{(p)}\}$ is a bent partition, where $F(x, y)=R(x)+R'(y)$.
\end{corollary}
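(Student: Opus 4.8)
The plan is to recognize Corollary~\ref{Corollary 2} as nothing more than the specialization of Theorem~\ref{Theorem 6} to the additive combining map $K(x,y)=x+y$ in the case $m=m'=r$. Since Theorem~\ref{Theorem 6} already delivers the full conclusion (a bent partition belonging to class $\mathcal{WBP}$) for any $K$ satisfying its two balancedness hypotheses, the entire task reduces to checking that this particular $K$ meets those hypotheses. Everything else in the statement, namely that $R$ and $R'$ are weakly regular vectorial bent with dual decompositions $(R_{c})^{*}(x)=G_{c}(x)+h(x)$ and $(R'_{c})^{*}(x)=G'_{c}(x)+h'(x)$, is inherited verbatim from the ``same notation'' clause.

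First I would fix $K\colon V_{r}^{(p)}\times V_{r}^{(p)}\rightarrow V_{r}^{(p)}$, $K(x,y)=x+y$, and examine the two induced families of maps from Theorem~\ref{Theorem 6}. For each fixed $x\in V_{r}^{(p)}$ the map $S^{(x)}(y)=x+y$ is a translation of $V_{r}^{(p)}$, hence a bijection; a bijection is automatically balanced, since each preimage set $D_{S^{(x)},j}$ is a singleton and thus all fibres have the common size $1$. By the symmetry of addition, for each fixed $y\in V_{r}^{(p)}$ the map $T^{(y)}(x)=x+y$ is likewise a translation, hence a bijection, hence balanced. Therefore both balancedness requirements on $K$ hold trivially, and it is precisely the equality $m=m'=r$ that makes $x\mapsto x+y$ a self-map of $V_{r}^{(p)}$ and so a bijection.

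Having verified the hypotheses on $K$, I would then simply invoke Theorem~\ref{Theorem 6} with these choices. Its conclusion applies directly to $F(x,y)=K(R(x),R'(y))=R(x)+R'(y)$, yielding that $\{D_{F,i},\, i\in V_{r}^{(p)}\}$ is a bent partition belonging to class $\mathcal{WBP}$, which is the assertion of the corollary. There is no genuine obstacle here: the proof is purely a matter of confirming that the additive map falls under the balancedness condition, the only point worth flagging being that $m=m'=r$ is exactly what guarantees that $K$ is a self-map whose coordinate translations are bijective and hence balanced.
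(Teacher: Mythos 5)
Your proposal is correct and matches the paper's approach exactly: the paper states that Corollary \ref{Corollary 2} "is directly obtained by Theorem 6," the implicit verification being precisely your observation that for $K(x,y)=x+y$ the maps $S^{(x)}$ and $T^{(y)}$ are translations of $V_{r}^{(p)}$, hence bijections, hence balanced. Nothing further is needed.
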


If $P$ is a permutation of $\mathbb{F}_{p^r}$ and $P(x)+x$ is also a permutation of $\mathbb{F}_{p^r}$, then $P$ is called a \emph{complete permutation}. For complete permutation, please refer to \cite{Hou2015Pe} and references therein. By using complete permutation, we obtain the following corollary.

\begin{corollary} \label{Corollary 3}
With the same notation as in Theorem 6. Let $m=m'=r, V_{m}^{(p)}=V_{m'}^{(p)}=V_{r}^{(p)}=\mathbb{F}_{p^r}$ and $K(x, y)=P(x+y)+x$, where $P$ is a complete permutation of $\mathbb{F}_{p^r}$. Then $\{D_{F, i}, i \in \mathbb{F}_{p^r}\}$ is a bent partition, where $F(x, y)=P(R(x)+R'(y))+R(x)$.
\end{corollary}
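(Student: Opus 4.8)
The plan is to obtain this as an immediate application of Theorem~\ref{Theorem 6} by checking that the specific choice $K(x,y)=P(x+y)+x$ satisfies the balancedness hypotheses imposed on $K$ there. Since $m=m'=r$ and $V_{m}^{(p)}=V_{m'}^{(p)}=V_{r}^{(p)}=\mathbb{F}_{p^r}$, a function from $\mathbb{F}_{p^r}$ to $\mathbb{F}_{p^r}$ is balanced precisely when it is a bijection, so both conditions of Theorem~\ref{Theorem 6} reduce to verifying that two one-variable restrictions of $K$ permute $\mathbb{F}_{p^r}$.

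First I would treat the restriction $S^{(x)}(y)=K(x,y)=P(x+y)+x$ for fixed $x\in\mathbb{F}_{p^r}$. The map $y\mapsto x+y$ is a bijection of $\mathbb{F}_{p^r}$, $P$ is a permutation, and translation by the constant $x$ is a bijection; composing these shows that $S^{(x)}$ is a bijection, hence balanced. This direction uses only that $P$ is a permutation and requires nothing about completeness.

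The crux is the restriction $T^{(y)}(x)=K(x,y)=P(x+y)+x$ for fixed $y$, and this is exactly where the complete permutation hypothesis enters. Setting $u=x+y$, so that $x=u-y$, I would rewrite $T^{(y)}(x)=P(u)+u-y=(P(u)+u)-y$. By the definition of a complete permutation, $u\mapsto P(u)+u$ is a permutation of $\mathbb{F}_{p^r}$; composing it with the bijection $x\mapsto u=x+y$ and the translation by $-y$ shows that $T^{(y)}$ is a bijection, hence balanced. I expect the substitution $u=x+y$ to be the one nontrivial point of the argument: it decouples the two occurrences of $x$ and converts the requirement on $T^{(y)}$ into precisely the assertion that $P(u)+u$ permutes $\mathbb{F}_{p^r}$, which is the defining property of a complete permutation.

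With both balancedness conditions in hand, I would invoke Theorem~\ref{Theorem 6} for the function $K(x,y)=P(x+y)+x$. The resulting $F(x,y)=K(R(x),R'(y))=P(R(x)+R'(y))+R(x)$ coincides with the function in the statement, and Theorem~\ref{Theorem 6} then yields that $\{D_{F,i},\, i\in\mathbb{F}_{p^r}\}$ is a bent partition belonging to class $\mathcal{WBP}$, completing the proof.
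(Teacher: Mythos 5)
Your proposal is correct and follows essentially the same route as the paper: both reduce the corollary to checking the two balancedness (here, bijectivity) conditions on $K(x,y)=P(x+y)+x$ required by Theorem 6, using that $P$ is a permutation for the map $y\mapsto K(x,y)$ and the complete-permutation property (via the substitution $u=x+y$, equivalently the paper's observation that $x\mapsto P(x+y)+x+y$ is a permutation) for the map $x\mapsto K(x,y)$. No gaps.
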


\begin{proof}
We only need to prove that $K$ satisfies the conditions in Theorem 6. Since $P$ is a complete permutation, then $P(x)$ and $P(x)+x$ are permutations of $\mathbb{F}_{p^r}$. As $P$ is a permutation of $\mathbb{F}_{p^r}$, then for any $x \in \mathbb{F}_{p^r}$, $y\mapsto P(x+y)+x$ is a permutation of $\mathbb{F}_{p^r}$. As $P(x)+x$ is a permutation of $\mathbb{F}_{p^r}$, then for any $y \in \mathbb{F}_{p^r}$, $x\mapsto P(x+y)+x+y$ is a permutation of $\mathbb{F}_{p^r}$, and thus $x\mapsto P(x+y)+x$ is a permutation of $\mathbb{F}_{p^r}$.
\end{proof}

We give examples to illustrate Corollaries 2 and 3.

\begin{example} \label{Example 2}
Let $R: \mathbb{F}_{3^4} \times \mathbb{F}_{3^4}\rightarrow \mathbb{F}_{3^2}$ be given by $R(x_{1}, x_{2})=Tr_{2}^{4}(x_{1}^{69}x_{2}+x_{1}^{29})$, and $R': \mathbb{F}_{3^4} \times \mathbb{F}_{3^4}\rightarrow \mathbb{F}_{3^2}$ be given by $R'(y_{1}, y_{2})=Tr_{2}^{4}(y_{1}^{79}y_{2})$. By Proposition 3, $R$ and $R'$ satisfy the conditions in Corollary 2. By Corollary 2, $\{D_{F, i}, i \in \mathbb{F}_{3^2}\}$ is a bent partition of $\mathbb{F}_{3^4} \times \mathbb{F}_{3^4} \times \mathbb{F}_{3^4} \times \mathbb{F}_{3^4}$, where $F(x_{1}, x_{2}, y_{1}, y_{2})=Tr_{2}^{4}(x_{1}^{69}x_{2}+x_{1}^{29}+y_{1}^{79}y_{2})$ is not a vectorial dual-bent function.
\end{example}

\begin{example} \label{Example 3}
Let $R, R': \mathbb{F}_{2^6} \times \mathbb{F}_{2^6}\rightarrow \mathbb{F}_{2^6}$ be given by $R(x_{1}, x_{2})=x_{1}^{-1}x_{2}$, $R'(y_{1}, y_{2})=y_{1}y_{2}^{-1}$. By Proposition 3, $R$ and $R'$ satisfy the conditions in Corollary 3. Let $P(x)=x^{17}+x^{5}+\alpha x, x \in \mathbb{F}_{2^6}$, where $\alpha \in \mathbb{F}_{2^2} \backslash \mathbb{F}_{2}$. Then by \cite[Page 99]{Hou2015Pe}, $P$ is a complete permutation of $\mathbb{F}_{2^6}$. By Corollary 3, $\{D_{F, i}, i \in \mathbb{F}_{2^6}\}$ is a bent partition of $\mathbb{F}_{2^6} \times \mathbb{F}_{2^6} \times \mathbb{F}_{2^6} \times \mathbb{F}_{2^6}$, where $F(x_{1}, x_{2}, y_{1}, y_{2})=P(x_{1}^{-1}x_{2}+y_{1}y_{2}^{-1})+x_{1}^{-1}x_{2}=(x_{1}^{-1}x_{2}+y_{1}y_{2}^{-1})^{17}+(x_{1}^{-1}x_{2}+y_{1}y_{2}^{-1})^{5}
+(\alpha+1)x_{1}^{-1}x_{2}+\alpha y_{1}y_{2}^{-1}$ is a vectorial dual-bent function.
\end{example}

\section{Conclusion}
In this paper, bent partitions were further studied. For a bent partition of $V_{n}^{(p)}$ for which all generated $p$-ary bent functions are regular or all are weakly regular but not regular, we proved that its depth must be a power of $p$. For any characteristic $p$, we give a sufficient condition for the preimage set partition of a vectorial bent function to be a bent partition, based on which new constructions of bent partitions (not) corresponding to vectorial dual-bent functions were presented. Particularly, we provided a new construction of vectorial dual-bent functions. For $p=2$, we characterized general bent partitions in terms of Hadamard matrices.

To conclude this paper, some directions for further research are listed:
\begin{itemize}
  \item Regrading the depth problem of bent partitions, we have solved the case for those belonging to class $\mathcal{WBP}$. To fully resolve this challenging problem, future work should focus on the existence of bent partitions generating non-weakly regular bent functions, or bent partitions simultaneously generating regular and weakly regular but not regular bent functions.
  \item Theorem 4 provided a sufficient condition for constructing bent partitions belonging to class $\mathcal{WBP}$. It is interesting to investigate whether the sufficient condition is also necessary (see Remark 2).
  \item Based on Theorem 4 and Corollary 1, how to construct more bent partitions.
\end{itemize}

\end{document}